\DeclareMathOperator*{\supp}{supp}
\DeclareMathOperator*{\esssup}{ess\,sup}
\DeclareMathOperator*{\essinf}{ess\,inf}
\DeclareMathOperator*{\esssupp}{ess\,supp}
\theoremstyle{plain}
\newtheorem{theorem}{Theorem}[section]
\newtheorem{lemma}[theorem]{Lemma}
\newtheorem{corollary}[theorem]{Corollary}
\theoremstyle{definition}
\newtheorem{definition}[theorem]{Definition}
\begin{document}

% CMP
% \title{Symmetry of Bounce Solutions at Finite Temperature}
% \author{Yutaro Shoji \and Masahide Yamaguchi}
% \institute{{\bf Y. Shoji} \at
%                Jo\v{z}ef Stefan Institute, Jamova 39, 1000 Ljubljana, Slovenia\\
%                Centre For Cosmology and Science Popularization (CCSP), SGT University, Gurugram, Delhi-NCR, Haryana 122505, India\\
%                \email{\texttt Yutaro.Shoji@ijs.si}           
%             \and
%             {\bf M. Yamaguchi} \at
%                Cosmology, Gravity, and Astroparticle Physics Group, Center for Theoretical Physics of the Universe, Institute for Basic Science (IBS), Daejeon, 34126, Korea\\
%                Department of Physics, Institute of Science Tokyo, 2-12-1 Ookayama, Meguro-ku, Tokyo 152-8551, Japan\\
%                Department of Physics and IPAP, Yonsei University, 50 Yonsei-ro, Seodaemun-gu, Seoul 03722, Korea\\
%                \email{\texttt gucci@ibs.re.kr}
% }
% \date{Received: date / Accepted: date}

% ATMP
\title[Symmetry of Bounce Solutions at Finite Temperature]{Symmetry of Bounce Solutions\\at Finite Temperature}
\author[Yutaro Shoji and Masahide Yamaguchi]{Yutaro Shoji\footnotemark[1]\footnotemark[2] and Masahide Yamaguchi\footnotemark[3]\footnotemark[4]\footnotemark[5]}
\dedicatory{\footnotemark[1]Jo\v{z}ef Stefan Institute, Jamova 39, 1000 Ljubljana, Slovenia\\
\footnotemark[2]Centre For Cosmology and Science Popularization (CCSP), SGT University, Gurugram, Delhi-NCR, Haryana 122505, India\\
\footnotemark[3]Cosmology, Gravity, and Astroparticle Physics Group, Center for Theoretical Physics of the Universe, Institute for Basic Science (IBS), Daejeon, 34126, Korea\\
\footnotemark[4]Department of Physics, Institute of Science Tokyo, 2-12-1 Ookayama, Meguro-ku, Tokyo 152-8551, Japan\\
\footnotemark[5]Department of Physics and IPAP, Yonsei University, 50 Yonsei-ro, Seodaemun-gu, Seoul 03722, Korea\\
}

\begin{abstract}
    The seminal work of Coleman, Glaser, and Martin established that, at zero temperature, any non-trivial solution to the equations of motion with the least Euclidean action is $O(D)$-symmetric. This paper extends their foundational analysis to finite temperature. We rigorously prove that for a broad class of scalar potentials, any saddle-point configuration with the least action is necessarily $O(D\!-\!1)$-symmetric and monotonic in the spatial directions. This result provides a firm mathematical justification for the symmetry properties widely assumed in studies of thermal vacuum decay and cosmological phase transitions.
\end{abstract}

\keywords{vacuum decay, bounce solutions, finite-temperature field theory, $O(D\!-\!1)$ symmetry, variational methods, Steiner symmetrization, rearrangement inequalities}
\maketitle

%%%%%%%%%%%%%%%%%%%%%%%%%%%%%%%%%%%%%%%%%%%%%%%%%%%%%%%%%%%%%%%%%%%%%%
\section{Introduction}

\label{sec:Intro}

Vacuum decay describes the transition of a system from a metastable (false) vacuum to a more stable (true) vacuum state through quantum or thermal fluctuations. This phenomenon is crucial in diverse fields, including early universe cosmology, where it underpins scenarios of phase transitions and the generation of gravitational waves, as well as in the study of string landscapes. The semi-classical approach to these tunneling effects hinges on identifying bounce solutions: non-trivial saddle points of the Euclidean action that connect the false and true vacua \cite{Coleman:1977py,Callan:1977pt}. The probability of such a decay, expressed as a decay rate per unit volume, is exponentially suppressed by the Euclidean action of these bounce solutions.

At zero temperature, the pioneering work of Coleman, Glaser, and Martin (CGM) \cite{Coleman:1977th} rigorously demonstrated that any saddle-point configuration with the least Euclidean action exhibits full $O(D)$ spherical symmetry in $D$-dimensional Euclidean spacetime, assuming the absence of gravitational effects\footnote{The $O(D)$ symmetry of the bounce solution has also been established in more general settings, such as multi-field models \cite{cmp/1103941720,LOPES1996378,Byeon2008SymmetryAM,Blum:2016ipp}. More recently, a proof including gravitational effects was provided under specific conditions using the AdS/CFT correspondence \cite{Oshita:2023pwr}.}.
This inherent symmetry significantly simplifies the analysis, as the bounce solution then depends solely on the radial coordinate $r = \sqrt{\tau^2 + |x|^2}$, thereby reducing the equations of motion to a set of ordinary differential equations.

However, the situation changes drastically at finite temperature. As first shown by Linde \cite{Linde:1980tt,Linde:1981zj} (see also \cite{Affleck:1980ac,Garriga:1994ut}), thermal field theory requires compactifying the Euclidean time direction with periodicity $\beta = 1/T$, transforming the spacetime manifold into $\mathbb{R}^{D-1} \times S^1$. The bounce solution must now be periodic in the Euclidean time direction and the $O(D)$-symmetry should be broken: Instead of $O(D)$ symmetry, the system at finite temperature admits at most $O(D-1)$ symmetry.
While the CGM argument can be adapted to the high-temperature limit, where the bounce solution becomes time-independent and the problem reduces to an analysis on $\mathbb{R}^{D-1}$, it does not extend to arbitrary finite temperatures.
Consequently, the assumption of $O(D-1)$ symmetry, while lacking a general proof, has been a foundational prerequisite for various numerical techniques developed to find bounce solutions at finite temperature \cite{Ferrera:1995gs,Widyan:1998wa,Widyan:1999zg,Fan:2024txm}.

This paper establishes a rigorous mathematical foundation for the symmetry of bounce solutions in thermal field theory at arbitrary finite temperatures. We prove that for any admissible scalar potential in $D>3$ dimensions, at least one non-trivial saddle point of the Euclidean action exists, and all saddle points with the least action are necessarily spherically symmetric and monotonic in the spatial directions.
Our methodology extends the seminal work of CGM for the zero-temperature case \cite{Coleman:1977th}: We first reformulate the problem of finding non-trivial saddle points into an equivalent, genuine minimization problem. By analyzing the minimizing sequence for this reduced problem, we then prove both the existence of a minimum and the symmetry of the resulting solution.

The transition to a finite-temperature setting, which involves the compactification of the Euclidean time direction, introduces technical challenges not present in the original CGM analysis. For instance, the field configuration can no longer be reduced to a simple one-dimensional radial profile, and consequently, several arguments in the original CGM analysis do not apply. In addition, the partial symmetrization in the spatial directions, known as Steiner symmetrization, demands a more sophisticated treatment, as the kinetic term, $(\partial_\tau\phi)^2$, and the spatial gradient term, $|\nabla_x\phi|^2$, behave differently under this symmetrization.

To address these challenges, we leverage advancements in the calculus of variations. The P\'olya-Szeg\"o inequality for Steiner symmetrization, a long-established result (see \cite{256c4953-55e2-3c00-93f8-ff3cdc00211f} and references therein), provides the necessary analytical tool. Crucially, recent characterizations of the equality conditions for this inequality, as studied in \cite{CIANCHI2006673} for codimension one and in \cite{Capriani2011TheSR} for arbitrary codimensions, are instrumental in our analysis. By extending their theorems to accommodate functions that vanish at spatial infinity, we apply these results to our reduced problem to rigorously establish the symmetry of the solution.

This work has significant implications not only for the theoretical understanding of vacuum decay at finite temperature but also for practical applications in cosmology, particularly in modeling first-order phase transitions and thermal bubble nucleation in the early universe. The established spherical symmetry and monotonicity of bounce solutions directly influence the decay rate formula and the most probable shape of nucleating bubbles, which in turn affect observable signatures such as stochastic gravitational wave backgrounds. Our results therefore bridge the gap between the physical intuition regarding thermal tunneling and the rigorous mathematical underpinnings of the associated variational problems.

The remainder of this paper is structured as follows. Section~\ref{sec:Statement} formally states the central problem: identifying non-trivial saddle points of the Euclidean action in a scalar field theory at finite temperature. In Section~\ref{sec:Reduced}, we introduce a scale-invariant functional and define the equivalent reduced minimization problem. Section~\ref{sec:symmetry} is dedicated to proving the existence and symmetry of solutions to this reduced problem, rigorously demonstrating that all solutions are spherically symmetric and monotonic in the spatial directions. Finally, Section~\ref{sec:Conclusion} summarizes our findings and outlines avenues for future research.

%%%%%%%%%%%%%%%%%%%%%%%%%%%%%%%%%%%%%%%%%%%%
\section{Statement of main result}
%%%%%%%%%%%%%%%%%%%%%%%%%%%%%%%%%%%%%%%%%%%%
\label{sec:Statement}
This section formally presents the central problem and outlines our main theorem.
For clarity, we note that the symbol $\subset$ in this paper will denote inclusion, which is not necessarily strict.

Throughout this paper, we adopt the notation $z=(x,\tau)\in\mathbb R^{D-1}\times S^1$, where $S^1$ denotes a circle of circumference $\beta$ and is identified with the interval $[0,\beta)$. Unless explicitly specified, the integral domain is $\mathbb R^{D-1}\times S^1$, and the integration measure is defined as
\begin{equation}
    \int\odif[order=D]{z}=\int_0^\beta\odif{\tau}\int\odif[order=D-1]{x}.
\end{equation}

Our analysis focuses on spacetime dimensions $D>3$ and functions in the locally Sobolev space $W_{\rm loc}^{1,2}(\mathbb R^{D-1}\times S^1)$ that vanish at spatial infinity in the sense that
\begin{equation}
    \forall \epsilon>0,~\exists~{\rm compact}~K\subset \mathbb R^{D-1}~{\rm s.t.}~|\phi(z)|<\epsilon~\text{\rm for }\mathcal H^D\text{\rm -a.e. } z\not\in (K\times S^1).\label{eq:vanish_at_infinity}
\end{equation}
Here, $\mathcal H^k$ is the $k$-dimensional Hausdorff measure.
The action is defined on this function space as
\begin{equation}
    S[\phi]=\int\odif[order=D]{z}\left(\frac{1}{2}(\partial_\tau\phi)^2+\frac12|\nabla_x\phi|^2+V(\phi)\right),
\end{equation}
where the potential $V:\mathbb R\to\mathbb R$ is a Borel measurable function, $\nabla_x$ and $\partial_\tau$ are understood as weak derivatives with respect to $x$ and $\tau$, respectively.
We assume, without loss of generality, that the false vacuum is located at $\phi_{\rm FV}=0$ and $V(\phi_{\rm FV})=0$.
We do not assume the existence of a global (true) vacuum; nevertheless,
the potential is required to satisfy the growth conditions stated below.
\begin{definition}
    A function $V:\mathbb R\to\mathbb R$ is said to be {\it admissible} if
\begin{itemize}
    \item $V(0)=0$;
    \item $V$ is Borel measurable;
    \item $V$ is lower semicontinuous;
    \item There exists $\phi_*\in \mathbb R$ such that $V(\phi_*)<0$ and $\sup_{\phi/\phi_* \in(0,1)} V(\phi)<\infty$;
    \item There exist positive constants, $\frac{2D}{D-2}>B>A>0$, $a>0,b>0$ such that
    \begin{equation}
        V(\phi)-a|\phi|^A+b|\phi|^B\geq0,\label{eq:condition_v}
    \end{equation}
    for all $\phi\in \mathbb R$. Here, these constants may differ for $\phi<0$ and $\phi>0$.
\end{itemize}
\end{definition}

\begin{theorem}[Main theorem]
    For an admissible potential and for $D>3$, there exists at least one non-trivial saddle point of the action, 
and all saddle points with the least action are $\mathcal H^D$-a.e.~spherically symmetric and monotonic in the spatial directions, provided that their spatial derivatives are $\mathcal H^D$-a.e.~non-zero.
\end{theorem}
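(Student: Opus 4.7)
I would follow the structural template of CGM, adapted to the thermal setting through the reduced problem introduced in Section~\ref{sec:Reduced}. The first step is to translate the search for a least-action nontrivial saddle point of $S[\phi]$ into a genuine constrained minimization for a scale-invariant ratio. Because only the spatial variable $x$ admits a rescaling (the $\tau$-circle has fixed circumference $\beta$), the scaling must be designed so that the gradient and potential contributions transform compatibly; using the functional defined in Section~\ref{sec:Reduced}, I verify that its critical points produce, after a single rescaling in $x$, genuine saddle points of the original action $S$, and that the ``$V$ is somewhere negative'' clause makes the minimum nontrivial.

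Next I would establish existence of a minimizer by the direct method. Taking a minimizing sequence $\{\phi_n\}$, I use the admissibility bound $V(\phi)\ge a|\phi|^A-b|\phi|^B$ together with the structure of the reduced functional to obtain uniform $W_{\rm loc}^{1,2}$ control after admissible translations in $x$. The subcritical condition $B<2D/(D-2)$ is what delivers Rellich--Kondrachov strong $L^B_{\rm loc}$ compactness on $\mathbb R^{D-1}\times S^1$, while the $L^A$-control prevents vanishing of the limit. Lower semicontinuity of $\int V(\phi)\,\odif[order=D]{z}$ follows from Fatou's lemma combined with the lower semicontinuity of $V$, and the Dirichlet term is weakly lower semicontinuous. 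Ruling out the dichotomy/vanishing alternatives of concentration-compactness along the minimizing sequence is where the admissibility hypotheses must be used most delicately.

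For the symmetry part, let $\phi$ be any least-action saddle point, and define $\phi^\ast$ as its slice-wise spatial Schwarz rearrangement: for each $\tau\in S^1$, the map $x\mapsto\phi(x,\tau)$ is replaced by its $(D-1)$-dimensional decreasing symmetrization. Since rearrangement is measure-preserving in $x$ for each fixed $\tau$,
\begin{equation}
\int V(\phi^\ast)\,\odif[order=D]{z}=\int V(\phi)\,\odif[order=D]{z},
\end{equation}
so $\phi^\ast$ remains admissible for the constrained problem. The P\'olya--Szeg\"o inequality for partial symmetrization then gives $\int|\nabla\phi^\ast|^2\,\odif[order=D]{z}\le\int|\nabla\phi|^2\,\odif[order=D]{z}$; the spatial gradient piece is controlled by the usual slice-wise argument, whereas the $\tau$-derivative piece requires a separate co-area/slicing estimate precisely because $\partial_\tau\phi$ does not obey a pointwise rearrangement inequality. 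Since $\phi$ is a minimizer of the reduced functional, equality must hold in P\'olya--Szeg\"o. Invoking the equality characterizations of \cite{CIANCHI2006673,Capriani2011TheSR}, which I would first extend from integrable/compactly supported functions to the $W_{\rm loc}^{1,2}$ class satisfying \eqref{eq:vanish_at_infinity} by a truncation-and-limit argument, forces $\phi(\cdot,\tau)$ to agree with a translate of $\phi^\ast(\cdot,\tau)$ for $\mathcal H^1$-a.e.\ $\tau$. The hypothesis $\nabla_x\phi\neq 0$ a.e.\ eliminates the plateau counterexamples that these theorems admit, and a continuity-in-$\tau$ argument pins the translation parameter to be $\tau$-independent, yielding the full $O(D\!-\!1)$ spatial symmetry and strict radial monotonicity.

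The hardest part will be the combination of two issues: first, extending the Cianchi--Capriani equality characterizations to our setting, in which the underlying space is $\mathbb R^{D-1}\times S^1$, regularity is only local Sobolev, and decay at infinity is only in the weak sense of \eqref{eq:vanish_at_infinity}; and second, simultaneously controlling the concentration-compactness alternative for the minimizing sequence in the noncompact spatial direction while respecting the anisotropic behavior of the two derivative pieces under Steiner symmetrization. Once these two obstacles are cleared, the remaining ingredients (Lagrange-multiplier reduction, rearrangement invariance of the potential, and weak lower semicontinuity) are essentially standard.
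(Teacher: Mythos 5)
Your overall route is the same as the paper's: pass to the scale-invariant reduced functional, get existence by the direct method, symmetrize slice-wise in $x$ (this is exactly the Steiner rearrangement of codimension one in $\tau$), and then upgrade equality in the P\'olya--Szeg\"o inequality to symmetry via the Cianchi--Capriani equality characterizations, extended by truncation to functions that only vanish at spatial infinity in the sense of \eqref{eq:vanish_at_infinity}. However, there is a genuine gap: you never deal with configurations whose support is disconnected, nor do you reduce to functions of definite sign (the rearrangement machinery you invoke needs non-negative functions). Unlike the full Schwarz symmetrization used by CGM, Steiner symmetrization does not merge disconnected pieces of the support, so a priori a minimizer could be a multi-instanton configuration; such functions are precisely the known counterexamples to the equality characterization, and even where the theorem applies it only yields a translate of the rearrangement on each connected component of the $\tau$-projection of the support, with possibly different centers on different components --- which is not global $O(D\!-\!1)$ symmetry. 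The paper closes this with a dedicated selection argument (iterated over components, including the case of infinitely many components, and also applied to $\phi_\pm$) showing that keeping a single connected component strictly decreases $\mathcal R$, so minimizers are of one sign with connected measure-theoretic support; without some such step your symmetry conclusion does not follow.

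Two further points. First, your mechanism for pinning the translation is wrong as stated: a center $c(\tau)$ drifting smoothly with $\tau$ is perfectly continuous, so ``continuity in $\tau$'' cannot force a $\tau$-independent translation. What rules out drifting centers is equality of the $\tau$-kinetic term (a nonconstant $c(\tau)$ strictly increases $\int(\partial_\tau\phi)^2$ when $\nabla_x\phi\neq0$ on a set of positive measure), and this is exactly what Capriani's codimension-general equality theorem encodes: applied to the strictly convex $f(\nabla\phi)=|\nabla\phi|^2$, together with the separate saturations $\mathcal T[\phi^\sigma]=\mathcal T[\phi]$ and $\mathcal K[\phi^\sigma]=\mathcal K[\phi]$ and connectedness of the projection of the support, it gives a single translation directly; you should use it in that form rather than a per-slice argument plus continuity. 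Second, your existence step is a different route from the paper's (which symmetrizes the minimizing sequence first and then uses monotonicity, Hardy's inequality, the decay bound on $|x|^{(D-3)/2}\phi_n$, and a small-radius argument to keep $\mathcal V$ from escaping to infinity, rather than Lions concentration-compactness); that alternative could in principle work, but you explicitly leave the exclusion of vanishing/dichotomy unresolved, and this is precisely where the admissibility bound $\frac{2D}{D-2}>B>A>0$ must be used quantitatively, so as written the existence of a non-trivial saddle point is not yet established.
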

Note that, for an analytic $V$, the condition on the spatial derivatives is satisfied automatically as bounce solutions are also analytic.
The condition on the spatial derivatives can be relaxed to permit plateaus where the field value is extremal, {\it i.e.}, at $\phi=0$ or at the maximum of $\phi$ in the spatial directions.
This theorem is proven by Theorem \ref{thm:reduced_problem} and Theorem \ref{thm:symmetry} in the following sections.

For a differentiable potential $V$, a saddle point is, by definition, a weak solution to the Euler-Lagrange equation:
\begin{equation}
    \int\odif[order=D]{z}\ab[(\partial_\tau \delta\phi)(\partial_\tau\phi)+(\nabla_x \delta\phi)(\nabla_x\phi)+\delta\phi \,V'(\phi)]=0,
\end{equation}
for all $\delta\phi\in C_c^\infty(\mathbb R^{D-1}\times S^1)$. We focus on weak solutions throughout this paper, as they are the fundamental objects in the path integral formulation and can accommodate non-smooth configurations. The question of regularity and the existence of strong solutions is beyond the scope of this work.

%%%%%%%%%%%%%%%%%%%%%%%%%%%%%%%%%%%%%%%%%%%%
\section{Reduced problem}
%%%%%%%%%%%%%%%%%%%%%%%%%%%%%%%%%%%%%%%%%%%%
\label{sec:Reduced}

In this section, we introduce a scale-invariant functional and a genuine minimization problem that is equivalent to finding a non-trivial saddle point of the action. This reformulation is also necessary to utilize the symmetrization techniques that preserve the integral of the potential.

We first decompose the Euclidean action into gradient, kinetic and potential terms as
\begin{equation}
    S[\phi]=\mathcal T[\phi]+\mathcal K[\phi]+\mathcal V[\phi],
\end{equation}
where
\begin{align}
    \mathcal T[\phi]&=\int\odif[order=D]{z}\frac12|\nabla_x\phi|^2,\\
    \mathcal K[\phi]&=\int\odif[order=D]{z}\frac12(\partial_\tau\phi)^2,\\
    \mathcal V[\phi]&=\int\odif[order=D]{z}V(\phi).
\end{align}

Consider the scale transformation $\phi_\lambda(x,\tau) = \phi(x/\lambda, \tau)$ for a positive constant $\lambda$. The action of the rescaled function is given by
\begin{equation}
    S[\phi_\lambda]=\lambda^{D-3}\mathcal T[\phi]+\lambda^{D-1}\ab(\mathcal V[\phi]+\mathcal K[\phi]).
\end{equation}
If $\bar\phi$ is a saddle point of the action, it must be stationary under this scale transformation. Taking the derivative with respect to $\lambda$ and setting $\lambda=1$ yields a variant of Derrick's relation \cite{Derrick:1964ww}:
\begin{equation}
    (D-3)\mathcal T[\bar\phi]+(D-1)\ab(\mathcal V[\bar\phi]+\mathcal K[\bar\phi])=0.\label{eq:derrick}
\end{equation}
Since $\mathcal T[\bar\phi]$ is strictly positive for any non-trivial solution, this relation implies that $\mathcal V[\bar\phi]+\mathcal K[\bar\phi]<0$ and, consequently, $\mathcal V[\bar\phi]<0$. Note that if $\mathcal T[\bar\phi]=0$, then $\nabla_x\bar\phi=0$ $\mathcal H^D$-a.e., which, combined with the condition of vanishing at spatial infinity, forces the solution to be the trivial one, $\bar\phi=0$ $\mathcal H^D$-a.e.

This observation motivates the definition of the following scale-invariant functional:
\begin{equation}
    \mathcal R[\phi]=\frac{(\mathcal T[\phi])^{\frac{D-1}{D-3}}}{-\mathcal V[\phi]-\mathcal K[\phi]}.\label{eq:def_R}
\end{equation}
Note that for the admissible potential, there exists $\phi\in W^{1,2}_{\rm loc}(\mathbb R^{D-1}\times S^1)$ vanishing at spatial infinity such that $\mathcal T[\phi]<\infty$ and $\mathcal V[\phi]+\mathcal K[\phi]<0$.

\begin{definition}
The reduced problem is to find a minimum of $\mathcal R[\phi]$ for a fixed negative value of $\mathcal V[\phi]$ and for $\mathcal K[\phi]<-\mathcal V[\phi]$ among functions $\phi\in W^{1,2}_{\rm loc}(\mathbb R^{D-1}\times S^1)$ vanishing at spatial infinity.
\end{definition}

\begin{definition}
    An infinite sequence $\{\phi_n\}$ is said to be a {\it minimizing sequence} of $\mathcal R$ if
    \begin{equation}
        \lim_{n\to\infty}\mathcal R[\phi_n]=\inf_{\phi}\mathcal R[\phi],
    \end{equation}
    where infimum is taken over all $\phi\in W^{1,2}_{\rm loc}(\mathbb R^{D-1}\times S^1)$ that vanish at spatial infinity and satisfy $\mathcal V[\phi]<0$ and $\mathcal K[\phi]<-\mathcal V[\phi]$.
\end{definition}
\begin{theorem}\label{thm:reduced_problem}
    If a solution of the reduced problem exists, it is a saddle point of the action that has action less than or equal to that of any non-trivial saddle point.
\end{theorem}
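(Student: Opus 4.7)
The plan is to show that any minimizer $\bar\phi$ of the reduced problem becomes, after an appropriate spatial rescaling, a weak solution of the Euler--Lagrange equation for $S$, and then to compare its action against an arbitrary non-trivial saddle point $\psi$ by exploiting the scale invariance of $\mathcal R$.

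First, I would derive the Euler--Lagrange equation satisfied by $\bar\phi$. Assuming the regularity needed to differentiate $\mathcal V$ against compactly supported perturbations, setting $\left.\frac{d}{d\epsilon}\mathcal R[\bar\phi+\epsilon\delta\phi]\right|_{\epsilon=0}=0$ and clearing the positive factors $\mathcal T[\bar\phi]$ and $-\mathcal V[\bar\phi]-\mathcal K[\bar\phi]$ yields, for every $\delta\phi\in C_c^\infty$,
\begin{equation*}
\int\nabla_x\bar\phi\cdot\nabla_x\delta\phi\,d^D z=-\lambda_0\int\bigl(V'(\bar\phi)\,\delta\phi+\partial_\tau\bar\phi\,\partial_\tau\delta\phi\bigr)\,d^D z,\qquad\lambda_0:=\frac{D-3}{D-1}\cdot\frac{\mathcal T[\bar\phi]}{-\mathcal V[\bar\phi]-\mathcal K[\bar\phi]}>0.
\end{equation*}
The constant $\lambda_0$ is precisely the mismatch that a horizontal rescaling of $\bar\phi$ will cure. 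Next, I would set $u(x,\tau)=\bar\phi(x/\lambda,\tau)$ with $\lambda=\sqrt{\lambda_0}$, substitute the test function $\delta\phi(x,\tau)=\eta(\lambda x,\tau)$ into the identity above, and track the Jacobian $\lambda^{-(D-1)}$ from the change of variables $y=x/\lambda$ together with the factors $\lambda^{\pm 1}$ from the spatial gradients; this gives, for every $\eta\in C_c^\infty$,
\begin{equation*}
\int\bigl(\nabla_x u\cdot\nabla_x\eta+\partial_\tau u\,\partial_\tau\eta+V'(u)\,\eta\bigr)\,d^D z=0,
\end{equation*}
so $u$ is a non-trivial weak saddle point of $S$.

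Finally, to compare actions I would use the scaling formula for $S$ and the fact that $\lambda^2=\lambda_0$ is precisely Derrick's relation for $u$ to obtain
\begin{equation*}
S[u]=\frac{2}{D-1}\left(\frac{D-3}{D-1}\right)^{\!(D-3)/2}\mathcal R[\bar\phi]^{(D-3)/2}.
\end{equation*}
For any non-trivial saddle point $\psi$ of $S$, the Derrick relation \eqref{eq:derrick} forces $\mathcal V[\psi]<0$ and $\mathcal K[\psi]<-\mathcal V[\psi]$, so $\psi$ is admissible in the reduced problem and the same relation yields the identical expression $S[\psi]=\frac{2}{D-1}\left(\frac{D-3}{D-1}\right)^{\!(D-3)/2}\mathcal R[\psi]^{(D-3)/2}$. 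Combined with $\mathcal R[\bar\phi]\le\mathcal R[\psi]$ (immediate from $\bar\phi$ being a minimizer) and the scale invariance $\mathcal R[u]=\mathcal R[\bar\phi]$, this delivers $S[u]\le S[\psi]$.

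The main obstacle I anticipate is the variational calculation for $\mathcal V$: one needs enough regularity of $V$ and enough $L^p$ control on $\phi$---provided by the growth condition $V(\phi)+b|\phi|^B\ge a|\phi|^A$ with $B<\frac{2D}{D-2}$---to justify differentiating under the integral against compactly supported perturbations. Once that analytic step is secured, the rest of the argument is essentially bookkeeping with the scaling exponents.
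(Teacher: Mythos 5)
Your proposal is correct and its second half (rescale to satisfy Derrick's relation, obtain $S=\tfrac{2}{D-1}\bigl(\tfrac{D-3}{D-1}\bigr)^{(D-3)/2}\mathcal R^{(D-3)/2}$ for both the rescaled minimizer and any non-trivial saddle point, then compare via minimality and scale invariance of $\mathcal R$) is essentially identical to the paper's argument. Where you differ is the first half: the paper never writes down the Euler--Lagrange equation for $\mathcal R$; it argues abstractly that minimizing $\mathcal R$ is the same as minimizing $S$ subject to Derrick's relation, that this constraint can always be met by adjusting the dilation parameter, and then exhibits the unstable direction via $\partial_\lambda^2 S[\bar\phi_\lambda]\big|_{\lambda=1}=-2(D-3)\mathcal T[\bar\phi]<0$. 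You instead compute the first variation of $\mathcal R$ explicitly, read off the multiplier $\lambda_0=\tfrac{D-3}{D-1}\tfrac{\mathcal T[\bar\phi]}{-\mathcal V[\bar\phi]-\mathcal K[\bar\phi]}$, and absorb it by the spatial rescaling $\lambda=\sqrt{\lambda_0}$, directly verifying the weak Euler--Lagrange equation for $S$. This buys a more concrete verification that the rescaled minimizer is a saddle point in the paper's own sense (a weak solution), at the price of needing $V$ differentiable and the dominated-convergence-type control you flag to differentiate $\mathcal V$ under the integral --- a caveat the paper shares implicitly, since its "extremum of $S$" claim presupposes the same variational structure, and its saddle-point notion is only defined for differentiable $V$. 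Two small points to tighten: the reduced problem fixes $\mathcal V[\phi]=\mathcal V_0$, so a generic $\delta\phi$ leaves the constraint set; you should note (as the paper does when it rescales $\phi_R$ to match $\mathcal V[\phi_S]$) that spatial dilations act transitively on negative values of $\mathcal V$ while preserving $\mathcal R$, so the constrained minimizer attains the unconstrained infimum and the free first variation of $\mathcal R$ indeed vanishes; the same remark is what legitimizes your "$\mathcal R[\bar\phi]\le\mathcal R[\psi]$", since $\psi$ need not satisfy $\mathcal V[\psi]=\mathcal V_0$. Also, your argument does not reproduce the paper's negative second $\lambda$-derivative showing the unstable direction, but by the paper's definition of saddle point this is not required for the statement as given.
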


\begin{proof}
Suppose that $\mathcal R[\phi_1]\leq\mathcal R[\phi_2]<\infty$ for two functions $\phi_1$ and $\phi_2$ having negative $\mathcal V+\mathcal K$. They can be rescaled to satisfy Eq.~\eqref{eq:derrick} with the scale factors given by
\begin{equation}
    \lambda_i^2=\frac{D-3}{D-1}\frac{\mathcal T[\phi_i]}{-\mathcal V[\phi_i]-\mathcal K[\phi_i]},
\end{equation}
with $i=1,2$.
The rescaled functions satisfy the following inequality:
\begin{align}
    S[\phi_{1,\lambda_1}]&=\lambda_1^{D-3}\ab[\mathcal T[\phi_1]+\lambda_1^2(\mathcal V[\phi_1]+\mathcal K[\phi_1])]\nonumber\\
    &=\ab(\frac{D-3}{D-1})^{\frac{D-3}{2}}\frac{2}{D-1}\mathcal R^{\frac{D-3}{2}}[\phi_1]\nonumber\\
    &\leq\ab(\frac{D-3}{D-1})^{\frac{D-3}{2}}\frac{2}{D-1}\mathcal R^{\frac{D-3}{2}}[\phi_2]\nonumber\\
    &=\lambda_2^{D-3}\ab[\mathcal T[\phi_2]+\lambda_2^2(\mathcal V[\phi_2]+\mathcal K[\phi_2])]\nonumber\\
    &=S[\phi_{2,\lambda_2}].
\end{align}
This demonstrates that minimizing the functional $\mathcal R$ is equivalent to minimizing the action $S$ subject to Derrick's relation, Eq.~\eqref{eq:derrick}.
Since this relation is one of the extremization conditions for the action and can always be solved by adjusting the scale parameter $\lambda$, a minimum of $\mathcal R$ corresponds to an extremum of $S$. This extremum is specifically a saddle point with an unstable direction because
\begin{equation}
    \pdv[order=2]{}{\lambda}S[\bar\phi_\lambda]|_{\lambda=1}=-2(D-3)\mathcal T[\bar\phi]<0.
\end{equation}

Now, let $\phi_R$ be a solution to the reduced problem rescaled to satisfy Derrick's relation, and let $\phi_S$ be any other non-trivial saddle point of the action. As a non-trivial saddle point, $\phi_S$ must satisfy $\mathcal{V}[\phi_S] < 0$. We can rescale $\phi_R$ to a new function $\phi'_R$ such that $\mathcal{V}[\phi'_R] = \mathcal{V}[\phi_S]$, a transformation that leaves $\mathcal{R}[\phi_R]$ invariant. By definition of $\phi_R$ as a minimizer, it follows that $\mathcal{R}[\phi'_R] = \mathcal{R}[\phi_R] \leq \mathcal{R}[\phi_S]$. The preceding analysis then implies that $S[\phi_R] \leq S[\phi_S]$. Therefore, any solution to the reduced problem corresponds to a saddle point of the action with action less than or equal to that of any other non-trivial saddle point.
%\qed
\end{proof}

\section{Existence and symmetry}

\label{sec:symmetry}

This section is dedicated to establishing the existence and symmetry properties of solutions to the reduced problem. Our analysis extends the original CGM framework but requires addressing several technical challenges unique to the finite-temperature setting. Specifically, we must contend with three key differences: the scale-invariant functional is distinct from the CGM's; Steiner symmetrization may not combine disconnected support regions; and the functions under study are not necessarily absolutely continuous. These distinctions necessitate a more intricate analysis and supplementary proofs beyond those in the original CGM work.
\begin{theorem}\label{thm:symmetry}
    There exists at least one solution to the reduced problem, and all solutions are necessarily $\mathcal H^D$-a.e.~spherically symmetric and monotonic in the spatial directions, provided that their spatial derivatives are $\mathcal H^D$-a.e.~non-zero.
\end{theorem}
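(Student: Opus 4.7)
The plan is to apply the direct method of the calculus of variations, with spatial Steiner symmetrization playing the role that full spherical symmetrization plays in the CGM argument. Given a minimizing sequence $\{\phi_n\}$ for $\mathcal{R}$, I would first pass to its spatial Steiner symmetrization $\{\phi_n^*\}$, obtained by replacing each time slice $\phi_n(\cdot,\tau)$ with its spherically decreasing rearrangement in $x$. Slicewise equimeasurability gives $\mathcal{V}[\phi_n^*]=\mathcal{V}[\phi_n]$; the P\'olya--Szeg\"o inequality for spatial symmetrization gives $\mathcal{T}[\phi_n^*]\leq\mathcal{T}[\phi_n]$; and the non-expansiveness of symmetric decreasing rearrangement in $L^2$, applied to pairs of nearby slices, yields $\mathcal{K}[\phi_n^*]\leq\mathcal{K}[\phi_n]$. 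Hence $\mathcal{R}[\phi_n^*]\leq\mathcal{R}[\phi_n]$, and the minimizing sequence may be assumed to consist of profiles spherically symmetric and decreasing in $x$.

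For the existence half, I would then extract uniform a priori bounds. Normalizing $\mathcal{V}[\phi_n^*]=-1$, the admissibility inequality $b|\phi|^B\leq a|\phi|^A+V(\phi)$ controls $\|\phi_n^*\|_{L^B}$ uniformly; together with the subcritical exponent $B<\tfrac{2D}{D-2}$ and Sobolev embedding on the cylinder, this interpolates with the bound on $\mathcal{T}[\phi_n^*]$ coming from minimality to give a uniform $W^{1,2}_{\mathrm{loc}}$ bound. Exploiting a Strauss-type radial decay estimate for the symmetric profiles in $x$, I would extract a subsequence converging weakly in the Sobolev norm and strongly in $L^A$ and $L^B$ on bounded spatial regions. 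Weak lower semicontinuity of $\mathcal{T}$ and $\mathcal{K}$, combined with convergence of $\mathcal{V}$ under the subcritical bound and the lower semicontinuity of $V$ through Fatou's lemma, then identifies the weak limit $\phi^\infty$ as a nontrivial minimizer.

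For the symmetry half, let $\bar\phi$ be any solution of the reduced problem and let $\bar\phi^*$ be its spatial Steiner symmetrization. Because $\mathcal{R}[\bar\phi^*]\leq\mathcal{R}[\bar\phi]$ and $\bar\phi$ is a minimizer, all three monotonicities above must collapse to equalities. Applying the equality characterization of the P\'olya--Szeg\"o inequality for spatial symmetrization (the codimension-one result of Cianchi, and its extension to arbitrary codimension) slicewise in $\tau$, and using the hypothesis that $\nabla_x\bar\phi$ is nonzero $\mathcal{H}^D$-a.e.\ to exclude plateaus, one concludes that $\bar\phi(\cdot,\tau)$ coincides $\mathcal{H}^{D-1}$-a.e.\ with its spherical rearrangement for $\mathcal{H}^1$-a.e.\ $\tau$. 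Since those characterizations are stated for Sobolev functions on bounded domains, I would first extend them to the present cylindrical setting by truncating at level $\epsilon>0$, applying the theorem on the superlevel set $\{|\phi|\geq\epsilon\}$ (which lies in a bounded cylinder by the vanishing-at-infinity hypothesis), and then letting $\epsilon\downarrow 0$.

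The main obstacle I anticipate is the compactness step: even after symmetrization, the minimizing sequence can a priori fail to converge by vanishing or by splitting into distant radially symmetric bumps. Symmetrization removes translation freedom in $x$ and fixes a common center, killing splitting, while radial decay combined with the $L^B$ lower bound from admissibility rules out vanishing. A further subtlety that distinguishes this from the original CGM proof is that minimizers live only in $W^{1,2}_{\mathrm{loc}}$ and need not be absolutely continuous, so the one-dimensional ODE reduction used at zero temperature is unavailable; this is precisely why the modern rearrangement-equality theorems, together with their extension to the cylindrical setting, become indispensable to the argument.
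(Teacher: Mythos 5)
The broad strategy you outline (Steiner symmetrization of a minimizing sequence, direct method for existence, equality-case characterization for symmetry of minimizers) is the paper's, but the symmetry half as you execute it has a genuine gap: you apply the equality characterization of the P\'olya--Szeg\"o inequality \emph{slicewise} in $\tau$. Slicewise, a Brothers--Ziemer-type theorem only yields that for a.e.\ $\tau$ the slice $\bar\phi(\cdot,\tau)$ agrees with its spherical rearrangement up to a translation by some center $c(\tau)$ that may depend on $\tau$; a ``wiggling tube'' configuration saturates the spatial-gradient inequality slice by slice and preserves $\mathcal V$, yet is not spherically symmetric on the cylinder. To force a common center you must exploit the saturation of the kinetic inequality $\mathcal K[\bar\phi^\sigma]=\mathcal K[\bar\phi]$, and that is precisely what the Cianchi/Capriani equality theorems for Steiner rearrangement deliver when applied on the whole cylinder to the strictly convex integrand $|\nabla u|^2$ of the \emph{full} gradient (including $\partial_\tau$), not slice by slice. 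The paper applies Capriani's equality theorem to $\phi$ on $\mathbb{R}^{D-1}\times(0,\beta)$, after extending it to functions vanishing at spatial infinity by the same $\epsilon$-truncation you propose (that part of your plan is fine), and after verifying the theorem's hypotheses on the level-set domains: connected projection onto the $\tau$ direction, boundedness in $x$, finite perimeter, and negligibility of boundary points whose generalized normal has vanishing $x$-component.

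Two preparatory reductions that those hypotheses require are missing from your proposal. First, definite sign: the rearrangement and the identity $\mathcal V[\phi^\sigma]=\mathcal V[\phi]$ only make sense for functions of a single sign (the potential is not assumed even), so one must first show, as the paper does by splitting $\phi=\phi_++\phi_-$ and a Minkowski/mediant argument on $\mathcal R$, that minimizing sequences and minimizers can be taken non-negative. Second, and more seriously, connected support: functions with disconnected support are known counterexamples to the equality characterization, and a minimizer could a priori consist of two distant spherically symmetric bumps about different centers while saturating every inequality you list. The paper devotes a separate lemma (handling even countably many components) to showing that any such configuration can be strictly improved by retaining a single connected component, so every minimizer has connected support; your remark that symmetrization ``kills splitting'' concerns compactness of the minimizing sequence and does not address this. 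Without these two steps and the full-cylinder (rather than slicewise) use of the equality theorem, the symmetry conclusion does not follow. The existence half of your sketch, by contrast, tracks the paper's route (Hardy-type bounds, a Strauss-type decay estimate $|x|^{(D-3)/2}\phi\lesssim 1$, local Rellich--Kondrachov compactness, weak lower semicontinuity plus Fatou with uniform tail control of $\int|\phi_n|^B$ to keep $\mathcal V$ from escaping to infinity), though the specific tail-control mechanism that rules out vanishing is only gestured at in your write-up.
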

This theorem is established through a series of lemmas presented in the following subsections.

\subsection{Definite sign}
We first demonstrate that the minimization problem can be restricted to functions of a definite sign. Our argument adapts the logic of Coleman, Glaser, and Martin \cite{Coleman:1977th} to our scale-invariant functional $\mathcal{R}$. Subsequently, we will assume, without loss of generality, that the functions in any minimizing sequence are non-negative, as the case for non-positive functions follows analogously.

\begin{lemma}\label{lem:selection}
    Let $\phi \in W^{1,2}_{\rm loc}(\mathbb R^{D-1}\times S^1)$ be a function such that $\mathcal T[\phi]$ and $\mathcal K[\phi]+\mathcal V[\phi]<0$ are finite. Consider a decomposition $\phi = \phi_1 + \phi_2$, where $\phi_1$ and $\phi_2$ have disjoint supports. If $\mathcal V[\phi]+\mathcal K[\phi]<0$, then either $\mathcal R[\phi_1]\leq\mathcal R[\phi]$ or $\mathcal R[\phi_2]\leq\mathcal R[\phi]$. Moreover, equality holds if and only if $\phi_2=0$ $\mathcal H^D$-a.e.~in the former case, or $\phi_1=0$ $\mathcal H^D$-a.e.~in the latter.
\end{lemma}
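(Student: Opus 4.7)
The plan is to exploit the fact that $\mathcal T$, $\mathcal K$, and $\mathcal V$ decompose additively over a support-disjoint splitting $\phi = \phi_1 + \phi_2$, and then to apply the superadditivity of $t \mapsto t^p$ with $p = (D-1)/(D-3) > 1$ to the numerator of $\mathcal R$.

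First I would verify the additive decomposition
\[
\mathcal T[\phi] = \mathcal T[\phi_1] + \mathcal T[\phi_2], \quad \mathcal K[\phi] = \mathcal K[\phi_1] + \mathcal K[\phi_2], \quad \mathcal V[\phi] = \mathcal V[\phi_1] + \mathcal V[\phi_2].
\]
For $\mathcal V$ the normalization $V(0) = 0$ takes care of the common complement $(\supp\phi_1 \cup \supp\phi_2)^c$. For $\mathcal T$ and $\mathcal K$ I would invoke the Stampacchia-type fact that the weak derivatives of a Sobolev function vanish $\mathcal H^D$-almost everywhere on its zero set, so that the cross-terms in the pointwise expansions of $|\nabla_x(\phi_1+\phi_2)|^2$ and $(\partial_\tau(\phi_1+\phi_2))^2$ drop out a.e.

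Writing $T_i := \mathcal T[\phi_i] \geq 0$ and $W_i := -\mathcal V[\phi_i] - \mathcal K[\phi_i]$, so that $T := T_1 + T_2$ and $W := W_1 + W_2 > 0$ by hypothesis, I would then distinguish two cases. If one of the $W_i$ (say $W_1$) is non-positive, then $W_2 \geq W$ and $T_2 \leq T$ immediately yield $\mathcal R[\phi_2] = T_2^p/W_2 \leq T^p/W = \mathcal R[\phi]$, and the other sub-case is symmetric. If both $W_i > 0$, the key estimate takes the convex-combination form
\[
\frac{W_1}{W}\,\mathcal R[\phi_1] + \frac{W_2}{W}\,\mathcal R[\phi_2] = \frac{T_1^p + T_2^p}{W} \leq \frac{(T_1 + T_2)^p}{W} = \mathcal R[\phi],
\]
whose middle inequality is the superadditivity of $t^p$ on $[0,\infty)$ for $p > 1$; this forces $\min\{\mathcal R[\phi_1], \mathcal R[\phi_2]\} \leq \mathcal R[\phi]$, which is the desired dichotomy.

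For the equality clause I would trace saturation back through this chain: for $p > 1$ the superadditivity $T_1^p + T_2^p \leq (T_1 + T_2)^p$ is strict as soon as both $T_i > 0$, and $T_i = 0$ combined with the vanishing-at-infinity condition~\eqref{eq:vanish_at_infinity} forces $\phi_i = 0$ $\mathcal H^D$-a.e., which in turn makes $W_i = 0$. Equality can therefore only be attained in the first sub-case, where saturation additionally forces the piece with non-positive $W$ to vanish identically, yielding exactly the statement that equality in the former alternative requires $\phi_2 = 0$ $\mathcal H^D$-a.e.\ and equality in the latter requires $\phi_1 = 0$ $\mathcal H^D$-a.e.

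The hardest part I anticipate is making the additive decomposition rigorous for the gradient and kinetic terms: disjointness of topological supports does not by itself prevent weak-derivative cross-contributions along shared boundary strata, so the argument has to be carried out using essential supports and the Stampacchia property for $W^{1,2}_{\rm loc}$ functions. Once this technical step is secured, the remainder of the proof is a purely convex-analytic manipulation of scalars.
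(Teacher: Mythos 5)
Your proposal is correct and follows essentially the same route as the paper: additivity of $\mathcal T$, $\mathcal K$, $\mathcal V$ over the support-disjoint splitting, a sign case analysis on $W_i=-\mathcal V[\phi_i]-\mathcal K[\phi_i]$, superadditivity of $t\mapsto t^{(D-1)/(D-3)}$ together with the mediant/weighted-average inequality, and tracing saturation (via $\mathcal T[\phi_i]=0$ plus vanishing at spatial infinity) for the equality clause. If anything, you are more explicit than the paper about the Stampacchia-type justification of the cross-term cancellation and the equality characterization, which the paper leaves terse.
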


\begin{proof}
    Let $U_1 = \supp(\phi_1)$ and $U_2 = \supp(\phi_2)$ be the supports of $\phi_1$ and $\phi_2$, respectively.
    By construction, we have
    \begin{align}
        \nabla_x\phi_1&=(\nabla_x\phi)\chi_{U_1},\quad
        \nabla_x\phi_2=(\nabla_x\phi)\chi_{U_2},\\
        \partial_\tau\phi_1&=(\partial_\tau\phi)\chi_{U_1},\quad
        \partial_\tau\phi_2=(\partial_\tau\phi)\chi_{U_2},
    \end{align}
    $\mathcal H^D$-a.e., where $\chi_{U_i}$ denotes the indicator function.
    From these relations and the assumption $V(0)=0$, we obtain
    \begin{align}
        \mathcal T[\phi]&=\mathcal T[\phi_1]+\mathcal T[\phi_2],\\
        \mathcal K[\phi]&=\mathcal K[\phi_1]+\mathcal K[\phi_2],\\
        \mathcal V[\phi]&=\mathcal V[\phi_1]+\mathcal V[\phi_2].
    \end{align}
    Consequently, $\mathcal R[\phi]$ can be expressed as
    \begin{equation}
        \mathcal R[\phi]=\frac{(\mathcal T[\phi_1]+\mathcal T[\phi_2])^{\frac{D-1}{D-3}}}{-(\mathcal V[\phi_1]+\mathcal K[\phi_1])-(\mathcal V[\phi_2]+\mathcal K[\phi_2])}.
    \end{equation}

    Since the denominator must be positive, there are three possible cases:
    \begin{enumerate}
        \item $\mathcal V[\phi_1]+\mathcal K[\phi_1]<0$ and $\mathcal V[\phi_2]+\mathcal K[\phi_2]\geq0$,
        \item $\mathcal V[\phi_1]+\mathcal K[\phi_1]\geq0$ and $\mathcal V[\phi_2]+\mathcal K[\phi_2]<0$,
        \item $\mathcal V[\phi_1]+\mathcal K[\phi_1]<0$ and $\mathcal V[\phi_2]+\mathcal K[\phi_2]<0$.
    \end{enumerate}
    For each case, we find the following inequities\footnote{Case 1 and case 2 are trivial. For case 3, we used
    \begin{equation}
        \ab(a+b)^p\geq a^p+b^p,\label{eq:minkowski}
    \end{equation}
    for $a,b>0$ and $p>1$, which follows from Minkowski's inequality and the equality holds if and only if $a=0$ or $b=0$.
    We also used the mediant inequality,
    \begin{equation}
        \frac{a+b}{c+d}=\frac{c}{c+d}\frac{a}{c}+\frac{d}{c+d}\frac{b}{d}\geq\min\ab(\frac{a}{c},\frac{b}{d}),\label{eq:mediant}
    \end{equation}
    for $a,b,c,d>0$.}:
    \begin{enumerate}
        \item $\mathcal R[\phi_1]\leq \mathcal R[\phi]$,
        \item $\mathcal R[\phi_2]\leq \mathcal R[\phi]$,
        \item $\min(\mathcal R[\phi_1],\mathcal R[\phi_2])< \mathcal R[\phi]$.
    \end{enumerate}
    In case 1, equality holds if and only if $\phi_2=0$ $\mathcal H^D$-a.e., while in case 2, it holds if and only if $\phi_1=0$ $\mathcal H^D$-a.e.
%\qed
\end{proof}
\begin{lemma}\label{lem:non-negativity}
    Either there exists a minimizing sequence such that $\phi_n(z)\geq0$ for all $n$ and all $z\in \mathbb R^{D-1}\times S^1$, or there exists a minimizing sequence such that $\phi_n(z)\leq0$ for all $n$ and all $z\in \mathbb R^{D-1}\times S^1$.
\end{lemma}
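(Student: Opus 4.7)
The plan is to extract, from an arbitrary minimizing sequence, a subsequence of sign-definite functions by applying Lemma \ref{lem:selection} to the canonical positive/negative-part decomposition. Given a minimizing sequence $\{\phi_n\}$, I would set $\phi_n^{(+)} := \phi_n \chi_{\{\phi_n \geq 0\}}$ and $\phi_n^{(-)} := \phi_n \chi_{\{\phi_n < 0\}}$, so that $\phi_n = \phi_n^{(+)} + \phi_n^{(-)}$, the two summands have disjoint supports, and $\phi_n^{(+)} \geq 0$, $\phi_n^{(-)} \leq 0$ pointwise.

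First I would verify that each summand is itself a valid competitor in the reduced problem. By the standard chain rule for Sobolev functions, $\phi_n^{(\pm)} \in W^{1,2}_{\rm loc}(\mathbb{R}^{D-1} \times S^1)$, with weak derivatives equal to those of $\phi_n$ restricted to the corresponding level set; and the pointwise bound $|\phi_n^{(\pm)}(z)| \leq |\phi_n(z)|$ transfers the vanishing-at-infinity property \eqref{eq:vanish_at_infinity} from $\phi_n$ to each piece. Since $\{\phi_n\}$ is minimizing, $\mathcal V[\phi_n] + \mathcal K[\phi_n] < 0$, so the hypotheses of Lemma \ref{lem:selection} are satisfied. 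Applying that lemma then yields a choice $\tilde\phi_n \in \{\phi_n^{(+)}, \phi_n^{(-)}\}$ with $\mathcal R[\tilde\phi_n] \leq \mathcal R[\phi_n]$. The case analysis in Lemma \ref{lem:selection} further guarantees that the chosen piece satisfies $\mathcal V[\tilde\phi_n] + \mathcal K[\tilde\phi_n] < 0$, and since $\mathcal K[\tilde\phi_n] \geq 0$ this forces $\mathcal V[\tilde\phi_n] < 0$, placing $\tilde\phi_n$ in the admissible class.

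A sandwich argument then shows $\{\tilde\phi_n\}$ is itself a minimizing sequence, as $\inf \mathcal R \leq \mathcal R[\tilde\phi_n] \leq \mathcal R[\phi_n] \to \inf \mathcal R$. I would finish by partitioning $\mathbb{N} = I_+ \sqcup I_-$ according to whether $\tilde\phi_n$ equals $\phi_n^{(+)}$ or $\phi_n^{(-)}$; at least one of $I_\pm$ is infinite, and the corresponding subsequence is a minimizing sequence of constant sign, proving the lemma.

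The only delicate point is the bookkeeping that certifies the truncations $\phi_n^{(\pm)}$ inherit both the Sobolev regularity and the decay at infinity required to feed them into Lemma \ref{lem:selection}; no substantive analytic obstacle arises beyond this verification, since the selection principle does all the real work.
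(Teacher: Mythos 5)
Your proof is correct and takes essentially the same route as the paper: decompose each $\phi_n$ into its positive and negative parts (disjoint supports), apply Lemma \ref{lem:selection} to select the sign-definite piece with $\mathcal R$ not larger and $\mathcal V+\mathcal K<0$, and pass to the infinite index set of one fixed sign, using the sandwich bound to see the resulting subsequence is still minimizing. The only difference is that the paper additionally rescales each selected piece so that $\mathcal V[\phi_{n\pm,\lambda}]=\mathcal V[\phi_n]$, keeping the sequence within the fixed-$\mathcal V$ formulation of the reduced problem; this normalization leaves $\mathcal R$ invariant and is a one-line addition that does not affect the validity of your argument.
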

\begin{proof}
    Any function $\phi\in W^{1,2}_{\rm loc}(\mathbb R^{D-1}\times S^1)$ can be written as the sum of its positive and negative parts,
    \begin{equation}
        \phi(z)=\phi_+(z)+\phi_-(z),~\phi_+(z)=\max\ab\{\phi(z),0\},~\phi_-(z)=\min\ab\{\phi(z),0\}.
    \end{equation}
    Since $\phi_+$ and $\phi_-$ have disjoint supports, from Lemma \ref{lem:selection}, we have either $\mathcal R[\phi_+]\leq\mathcal R[\phi]$ or $\mathcal R[\phi_-]\leq\mathcal R[\phi]$.

    By selecting the components that yield smaller or equal values of $\mathcal R$, we can construct two sequences: $\{\phi_n : \mathcal R[\phi_{n+}] \leq \mathcal R[\phi_n]\}$ and $\{\phi_n : \mathcal R[\phi_{n-}] \leq \mathcal R[\phi_n]\}$. Since $\{\phi_n\}$ is a minimizing sequence, at least one of these sequences has to be a minimizing sequence. Note that both can be minimizing sequences if they yield the same value of $\mathcal R$ as $n\to\infty$. Finally, we rescale the elements of the new minimizing sequence, $\phi_{n\pm}$, such that $\mathcal V[\phi_{n\pm,\lambda}] = \mathcal V[\phi_n]$, thereby satisfying the conditions of the reduced problem and proving the lemma.
%\qed
\end{proof}

\begin{lemma}\label{lem:non-negative-sol}
    If $\phi$ is a solution of the reduced problem, $\phi$ is either $\mathcal H^D$-a.e.~non-negative or $\mathcal H^D$-a.e.~non-positive.
\end{lemma}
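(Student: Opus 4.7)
The strategy is to decompose $\phi=\phi_++\phi_-$ into its non-negative and non-positive parts and argue that, unless one of them is trivial, we obtain a contradiction with the minimality of $\phi$ via Lemma~\ref{lem:selection}. The two parts have disjoint supports (meeting only on $\{\phi=0\}$), and by the standard chain-rule for Sobolev functions both belong to $W^{1,2}_{\rm loc}(\mathbb R^{D-1}\times S^1)$; since $|\phi_\pm|\leq|\phi|$, they also inherit the vanishing-at-spatial-infinity property in the sense of Eq.~\eqref{eq:vanish_at_infinity}.

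First I would verify that Lemma~\ref{lem:selection} is applicable: $\mathcal T[\phi]$ is finite (being dominated by the finite action after rescaling), and $\mathcal V[\phi]+\mathcal K[\phi]<0$ holds because $\phi$ belongs to the admissible class of the reduced problem. The lemma then gives three mutually exclusive cases according to the sign of $\mathcal V[\phi_\pm]+\mathcal K[\phi_\pm]$. In case~3, both $\mathcal V[\phi_\pm]+\mathcal K[\phi_\pm]<0$, and the lemma yields the \emph{strict} inequality $\min(\mathcal R[\phi_+],\mathcal R[\phi_-])<\mathcal R[\phi]$. Since $\mathcal V[\phi_\pm]<0$ (because $\mathcal K[\phi_\pm]\geq0$) and $\mathcal K[\phi_\pm]<-\mathcal V[\phi_\pm]$, both $\phi_+$ and $\phi_-$ are admissible for the reduced problem, so their $\mathcal R$-values are bounded below by $\mathcal R[\phi]$; this contradicts the strict inequality. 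Hence case~3 is ruled out.

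In case~1, only $\phi_+$ lies in the admissible class, and Lemma~\ref{lem:selection} gives $\mathcal R[\phi_+]\leq\mathcal R[\phi]$. Combined with the minimality bound $\mathcal R[\phi_+]\geq\mathcal R[\phi]$, this forces equality, which by the equality clause of Lemma~\ref{lem:selection} requires $\phi_-=0$ $\mathcal H^D$-a.e., i.e.\ $\phi\geq0$ $\mathcal H^D$-a.e. The symmetric argument applied to case~2 yields $\phi\leq0$ $\mathcal H^D$-a.e., completing the proof.

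\textbf{Main obstacle.} The only real subtlety is ensuring that the truncated parts $\phi_\pm$ legitimately compete in the reduced problem, so that minimality gives $\mathcal R[\phi_\pm]\geq\mathcal R[\phi]$. This requires verifying, in each case, the three admissibility conditions (Sobolev regularity, vanishing at spatial infinity, and the sign constraints $\mathcal V[\phi_\pm]<0$ and $\mathcal K[\phi_\pm]<-\mathcal V[\phi_\pm]$); the last two follow automatically from $\mathcal V[\phi_\pm]+\mathcal K[\phi_\pm]<0$ together with $\mathcal K[\phi_\pm]\geq 0$, so once the Sobolev chain-rule for $\phi_\pm$ is invoked the argument reduces to a clean case analysis built directly on Lemma~\ref{lem:selection}.
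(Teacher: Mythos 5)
Your proposal is correct and follows essentially the same route as the paper: decompose $\phi=\phi_++\phi_-$, apply Lemma~\ref{lem:selection} to the disjointly supported parts, and use the minimality of $\phi$ (together with the equality characterization) to force one part to vanish $\mathcal H^D$-a.e. The paper states this as a brief contradiction argument, while you spell out the case analysis and admissibility checks explicitly; the content is the same.
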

\begin{proof}
    Suppose, for the sake of contradiction, that $\phi$ has both positive and negative parts of non-zero measure. From Lemma \ref{lem:selection}, this would imply that either $\mathcal{R}[\phi_+] < \mathcal{R}[\phi]$ or $\mathcal{R}[\phi_-] < \mathcal{R}[\phi]$. This contradicts the assumption that $\phi$ is a solution to the reduced problem, which by definition minimizes $\mathcal{R}$.
%\qed
\end{proof}
\subsection{Connected support}
A key distinction from the zero-temperature analysis is the need to address configurations with disconnected support. Unlike the full symmetrization employed in the CGM proof, Steiner symmetrization does not guarantee that the support of a function becomes connected. Physically, a disconnected support corresponds to a multi-instanton configuration, and we will show that the value of the functional $\mathcal{R}$ for such a configuration is strictly greater than that of one of its single-instanton components. This step is crucial, as functions with disconnected support are known counterexamples to the categorization of equality cases \cite{CIANCHI2006673}.
In this subsection, we demonstrate that any minimizing sequence can be refined to consist of functions whose support is connected in a specific measure-theoretical sense, which we define below. Subsequently, we will assume, without loss of generality, that the support of each function in our minimizing sequence is connected.
\begin{definition}
    Let $\Omega\subset R^n$ or $\Omega\subset R^{n-1}\times S^1$, and let $u:\Omega\to[0,\infty)$. For $\epsilon>0$, we define the measure-theoretical interior of the strict superlevel set by
    \begin{equation}
        \{u>\epsilon\}^{(1)}=\bigcup\{E\subset \Omega:u(z)>\epsilon~\text{\rm for}~\mathcal H^n\text{\rm -a.e.}~z\in E,~E~\text{\rm open}\},
    \end{equation}
    and also define the measure-theoretical support\footnote{This definition is chosen to be suitable for our analysis, which relies on the connectedness of the support, and it slightly differs from the standard definition.} of $u$ by
    \begin{equation}
        \{u>0^+\}^{(1)}=\bigcup_{k=1}^\infty\ab\{u>\frac{1}{k}\}^{(1)}.
    \end{equation}
\end{definition}

\begin{lemma}
    There exists a minimizing sequence such that $\{\phi_n>0^+\}^{(1)}$ is connected for all $n$.
\end{lemma}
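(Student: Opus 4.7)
The plan is to start from the non-negative minimizing sequence provided by Lemma~\ref{lem:non-negativity} and, for each $n$, replace $\phi_n$ by its restriction to a single connected component of $\{\phi_n>0^+\}^{(1)}$, chosen so that $\mathcal R$ does not increase. This is the natural extension of Lemma~\ref{lem:selection} from two summands to the (at most countable) family of connected components of the measure-theoretical support.

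Concretely, I would decompose $\{\phi_n>0^+\}^{(1)}=\bigsqcup_\alpha U_n^\alpha$ into its open connected components (countable by second countability of $\mathbb R^{D-1}\times S^1$) and set $\phi_n^\alpha=\phi_n\chi_{U_n^\alpha}$, so that these pieces have pairwise disjoint supports and $\mathcal T$, $\mathcal K$, $\mathcal V$ split additively as in the proof of Lemma~\ref{lem:selection}. The key step is to promote that lemma from two to countably many pieces: arguing by contradiction, if $\mathcal R[\phi_n^\alpha]>\mathcal R[\phi_n]$ for every non-trivial $\alpha$, then for each $\alpha$ one has $(\mathcal T[\phi_n^\alpha])^{(D-1)/(D-3)}\geq\mathcal R[\phi_n]\,(-\mathcal V[\phi_n^\alpha]-\mathcal K[\phi_n^\alpha])$ (trivially when the right-hand side is non-positive, strictly otherwise), and summing over $\alpha$ yields $\sum_\alpha(\mathcal T[\phi_n^\alpha])^{(D-1)/(D-3)}\geq(\mathcal T[\phi_n])^{(D-1)/(D-3)}$. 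This contradicts the strict super-additivity of Eq.~\eqref{eq:minkowski} for $p=(D-1)/(D-3)>1$ unless only one $\mathcal T[\phi_n^\alpha]$ is non-zero; combined with vanishing at spatial infinity, a vanishing spatial gradient forces the corresponding $\phi_n^\alpha$ itself to be trivial, so at most one $\phi_n^\alpha$ is non-trivial, contradicting the assumption of multiple components. Selecting the surviving $\phi_n^{\alpha_n}$ gives a new sequence that is still minimizing (its $\mathcal R$-values are sandwiched between $\inf\mathcal R$ and $\mathcal R[\phi_n]$) and whose $n$-th element has connected measure-theoretical support $U_n^{\alpha_n}$; the constraints $\mathcal V<0$ and $\mathcal K<-\mathcal V$ are automatically inherited from finiteness of $\mathcal R[\phi_n^{\alpha_n}]$.

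The main obstacle I anticipate is the Sobolev bookkeeping: I must verify that $\phi_n\chi_{U_n^\alpha}\in W^{1,2}_{\rm loc}$ and that the residual piece supported on the complement of $\{\phi_n>0^+\}^{(1)}$ vanishes $\mathcal H^D$-a.e., so that it contributes nothing to the decomposition. Both rest on a fine-regularity property of the precise representative of a non-negative Sobolev function, namely that $\mathcal H^D$-a.e.\ point at which the function is strictly positive already belongs to $\{\phi_n>0^+\}^{(1)}$. This replaces the pointwise-continuity argument that would suffice in the smooth setting of the classical CGM analysis, and is where I expect most of the technical care to go.
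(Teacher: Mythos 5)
Your proposal is correct in substance but takes a genuinely different route from the paper for the key counting step. The paper handles the decomposition into components in two stages: for finitely many components it iterates the two-piece selection Lemma~\ref{lem:selection} to isolate a single component $j$ satisfying Eqs.~\eqref{eq:con_cond1}--\eqref{eq:con_cond2}, and for infinitely many components it introduces a constant $C<1$ as in Eq.~\eqref{eq:halfR} and splits the tail into a finite set $F$ and its complement $I$ chosen to satisfy Eq.~\eqref{eq:fin_vs_inf}, thereby reducing to the finite case; it then rescales the selected component so that $\mathcal V$ is restored to its fixed value. You instead run a single contradiction argument uniformly in the number of components: assuming no component is usable, the term-wise bounds $(\mathcal T[\phi_n^\alpha])^{\frac{D-1}{D-3}}\geq\mathcal R[\phi_n]\,(-\mathcal V[\phi_n^\alpha]-\mathcal K[\phi_n^\alpha])$ sum to contradict the equality case of the superadditivity inequality \eqref{eq:minkowski} extended to countable families, since every component has $\mathcal T[\phi_n^\alpha]>0$ (a vanishing spatial gradient plus vanishing at spatial infinity would trivialize it, exactly the argument the paper uses below Eq.~\eqref{eq:derrick}). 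This is shorter and avoids the paper's quantitative tail estimate, at the price of needing countable additivity of $\mathcal T$, $\mathcal K$, $\mathcal V$ over the components (absolute integrability of $V(\phi_n)$) and the countable equality case of \eqref{eq:minkowski}; the paper's tail-splitting argument implicitly uses comparable additivity, so this is not a loss relative to it. Two minor remarks: your construction yields only $\mathcal R[\phi_n^{\alpha_n}]\leq\mathcal R[\phi_n]$, which suffices for a minimizing sequence (strictness, needed for the follow-up lemma on actual minimizers, would come from the equality analysis in Lemma~\ref{lem:selection}); and you omit the final rescaling to keep $\mathcal V$ at its fixed value, which is harmless under the paper's stated definition of a minimizing sequence but is performed explicitly in the paper. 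The Sobolev bookkeeping you flag---that $\phi_n=\sum_\alpha\phi_n\chi_{U_n^\alpha}$ $\mathcal H^D$-a.e. with $\nabla(\phi_n\chi_{U_n^\alpha})=(\nabla\phi_n)\chi_{U_n^\alpha}$---is indeed the delicate point, but the paper relies on the same facts implicitly when it applies Lemma~\ref{lem:selection} to $\phi_n\chi_{U_n^i}$ and $\Phi_n^{\{l>i\}}$, so flagging it rather than proving it puts you on equal footing with the paper.
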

\begin{proof}
    We decompose the support as $\{\phi_n>0^+\}^{(1)} = \bigcup_{i\geq1} U_n^i$, where each $U_n^i$ is selected so that $U_n^i$ constitutes a non-empty, maximal connected subset of $\{\phi_n>0^+\}^{(1)}$.

    First, consider the case where the number of connected components is finite. By iteratively applying Lemma \ref{lem:selection} to the functions $\phi_n\chi_{U_n^i}$ and $\Phi_n^{\{l>i\}}=\sum_{l> i}\phi_n\chi_{U_n^l}$ for $i=1,2,\dots$, we can identify a component $j$ such that
    \begin{equation}
        \mathcal V[\phi_n\chi_{U_n^j}]+\mathcal K[\phi_n\chi_{U_n^j}]<0,\label{eq:con_cond1}
    \end{equation}
    and
    \begin{equation}
        \mathcal R[\phi_n\chi_{U_n^j}]<\mathcal R\ab[\Phi_n^{\{l>j-1\}}]<\cdots<\mathcal R\ab[\Phi_n^{\{l>2\}}]<\mathcal R\ab[\Phi_n^{\{l>1\}}]<\mathcal R[\phi_n].\label{eq:con_cond2}
    \end{equation}
    This demonstrates that, for a finite number of disconnected components, we can always identify a single component that yields a strictly smaller value of $\mathcal R$.

    Next, we address the case where the number of components is infinite. We repeat the same iterative procedure for a finite number of steps, $k\geq1$. If at any step we identify a component $j$ satisfying Eqs.~\eqref{eq:con_cond1} and \eqref{eq:con_cond2}, the argument is complete. Otherwise, the resultant $\mathcal R\ab[\Phi_n^{\{l>k\}}]$ is strictly smaller than $\mathcal R[\phi_n]$, which implies that there exists a positive constant $C<1$ such that
    \begin{equation}
        \mathcal R\ab[\Phi_n^{\{l>k\}}]<C\mathcal R[\phi_n].\label{eq:halfR}
    \end{equation}
    Since $\mathcal V[\Phi_n^{\{l>k\}}] + \mathcal K[\Phi_n^{\{l>k\}}]$ is finite, we can partition the remaining infinite set of components $\{l \in \mathbb{N} : l > k\}$ into a finite subset $F$ and its complement $I = \{l \in \mathbb{N} : l > k\} \setminus F$ such that
    \begin{equation}
        \ab(\frac{1}{C}-1)\ab(\mathcal V\ab[\Phi_n^F]+\mathcal K\ab[\Phi_n^F])<\mathcal V\ab[\Phi_n^I]+\mathcal K\ab[\Phi_n^I].\label{eq:fin_vs_inf}
    \end{equation}
    Here, $\Phi_n^F=\sum_{i\in F}\phi_n\chi_{U_n^i}$ and $\Phi_n^I=\sum_{i\in I}\phi_n\chi_{U_n^i}$.
    We will now demonstrate that
    \begin{equation}
        \mathcal R\ab[\Phi_n^F]<\mathcal R[\phi_n].\label{eq:tobeshown}
    \end{equation}

    Applying Lemma \ref{lem:selection} to the decomposition $\Phi_n^{\{l>k\}} = \Phi_n^F + \Phi_n^I$, we have two possibilities.
    If $\mathcal R[\Phi_n^F] \leq \mathcal R[\Phi_n^{\{l>k\}}]$, then Eq.~\eqref{eq:tobeshown} follows directly from Eq.~\eqref{eq:halfR}.
    Alternatively, if $\mathcal R[\Phi_n^I] \leq \mathcal R[\Phi_n^{\{l>k\}}]$, we can use Eq.~\eqref{eq:fin_vs_inf} to improve the upper bound on $\mathcal R[\Phi_n^F]$:
    \begin{align}
        \mathcal R\ab[\Phi_n^{\{l>k\}}]&\geq \frac{(\mathcal T[\Phi_n^F])^{\frac{D-1}{D-3}}+(\mathcal T[\Phi_n^I])^{\frac{D-1}{D-3}}}{-(\mathcal V[\Phi_n^F]+\mathcal K[\Phi_n^F])-(\mathcal V[\Phi_n^I]+\mathcal K[\Phi_n^I])}>C\mathcal R[\Phi_n^F].
    \end{align}
    Combining this with Eq.~\eqref{eq:halfR} again implies Eq.~\eqref{eq:tobeshown}.
    Thus, in either case, we have identified a function $\Phi_n^F$, corresponding to a finite number of components, that has a smaller value of $\mathcal R$ than the original function $\phi_n$. By applying the argument for the finite case to $\Phi_n^F$, we can isolate a single component $j$ that satisfies Eqs.~\eqref{eq:con_cond1} and \eqref{eq:con_cond2}.

    This process yields a new sequence of functions, each associated with a single connected component and a strictly smaller value of $\mathcal{R}$. To ensure this new sequence adheres to the constraints of the reduced problem, we rescale each function $\phi_n \chi_{U_n^j}$ to a new function $\check{\phi}_n$ such that $\mathcal{V}[\check{\phi}_n] = \mathcal{V}[\phi_n]$. The resulting sequence $\{\check{\phi}_n\}$ is a minimizing sequence where each element has a connected support, as required.
%\qed
\end{proof}

\begin{lemma}
    If $\phi$ is a solution of the reduced problem, $\{\phi>0^+\}^{(1)}$ is connected.
\end{lemma}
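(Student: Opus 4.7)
The plan is to argue by contradiction, running the iterative selection procedure of the preceding lemma on the minimizer itself and exploiting the strict version of Lemma~\ref{lem:selection} that appears whenever both pieces of a decomposition are non-trivial. Suppose $\phi$ solves the reduced problem while $\{\phi>0^+\}^{(1)}$ is disconnected. By Lemma~\ref{lem:non-negative-sol} I may take $\phi\geq 0$ $\mathcal H^D$-a.e., and the measure-theoretical support then splits as a disjoint union of at least two maximal connected open components $\{U^i\}$, each of positive $\mathcal H^D$-measure and on which $\phi>0$ a.e. Setting $\phi_i=\phi\chi_{U^i}$ produces a decomposition $\phi=\sum_i\phi_i$ into disjointly supported, non-trivial summands, so the hypothesis of Lemma~\ref{lem:selection} is met with both parts non-zero at every stage.

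The next step is to replay the argument of the preceding lemma verbatim, applied now to $\phi$ rather than to an element of a minimizing sequence. In the finite-component case, iterated use of Lemma~\ref{lem:selection} on the pairs $\phi_i$ and $\sum_{l>i}\phi_l$ isolates a single component $j$ with $\mathcal V[\phi_j]+\mathcal K[\phi_j]<0$ and $\mathcal R[\phi_j]<\mathcal R[\phi]$; the inequality is strict at each step because the complementary piece is non-trivial, ruling out equality in Cases~1 and~2 of Lemma~\ref{lem:selection} and giving strict inequality automatically in Case~3. If the number of components is infinite, the finite/infinite splitting encoded in Eqs.~\eqref{eq:halfR}--\eqref{eq:fin_vs_inf} reduces the problem to a finite subfamily $F$ with $\mathcal R[\Phi^F]<\mathcal R[\phi]$, to which the finite-case step is then applied to extract a single component with strictly smaller $\mathcal R$.

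Finally, I would rescale the isolated component to a function $\tilde\phi$ with $\mathcal V[\tilde\phi]$ matching the value fixed by the reduced problem; scale invariance of $\mathcal R$ preserves the strict inequality $\mathcal R[\tilde\phi]<\mathcal R[\phi]$, and the admissibility conditions $\mathcal V[\tilde\phi]<0$ and $\mathcal K[\tilde\phi]<-\mathcal V[\tilde\phi]$ follow from $\mathcal K[\phi_j]\geq 0$ together with $\mathcal V[\phi_j]+\mathcal K[\phi_j]<0$. The existence of such a $\tilde\phi$ contradicts the minimality of $\phi$, forcing connectedness. The only real obstacle I anticipate is tracking strictness through the finite/infinite dichotomy, and this has already been arranged in the proof of the preceding lemma; the individual-minimizer version merely reads off that all the non-strict estimates appearing there become strict as soon as every $\phi_i$ is non-trivial.
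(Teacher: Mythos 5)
Your proposal is correct and follows essentially the same route as the paper, which proves the lemma by contradiction: if $\{\phi>0^+\}^{(1)}$ were disconnected, the component-selection machinery of the preceding lemma (via the strict case of Lemma~\ref{lem:selection}, strict because both pieces are non-trivial) would isolate a single connected component with strictly smaller $\mathcal R$, contradicting minimality. Your explicit handling of the infinite-component splitting and of the rescaling to restore $\mathcal V[\tilde\phi]=\mathcal V_0$ (using $\mathcal K\geq0$ to get $\mathcal V[\phi_j]<0$) simply spells out details the paper leaves implicit.
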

\begin{proof}
    The proof proceeds by contradiction, analogous to that of Lemma \ref{lem:non-negative-sol}. If the support were disconnected, one could select a single connected component that yields a strictly smaller value of $\mathcal{R}$, which would contradict the assumption that $\phi$ is a minimizer.
%\qed
\end{proof}

\subsection{Spherical symmetry and monotonicity}
We now turn our attention to establishing the symmetry of the solution. The subsequent subsections will summarize the relevant results from Capriani \cite{Capriani2011TheSR}, extend these theorems to functions vanishing at spatial infinity, and then apply them to our reduced problem. For the remainder of this section, we will assume, without loss of generality, that $\phi_n$ is spherically symmetric and monotonic with respect to $x$ for all $n$.

\subsubsection{Steiner symmetrization}
This subsection summarizes key definitions and theorems from Capriani's work on Steiner rearrangement \cite{Capriani2011TheSR}, which are essential for our analysis. We adopt the notation where $z=(x,y)$, with $x \in \mathbb{R}^{k}$ and $y \in \mathbb{R}^{n-k}$, and we are particularly interested in the case where $n=D$ and $k=D-1$. Accordingly, the derivatives of function $u$ are denoted by $\nabla u=(\nabla_x u,\nabla_y u)$.
Note that $x$ and $y$ have been swapped to adapt to our notation.

The volume of the $k$ dimensional unit ball is denoted by $\omega_k^{\mathcal H}=\mathcal H^k(\{x\in \mathbb R^k: |x|<1\})$.
For a measurable set $E\subset \mathbb{R}^{k}\times\mathbb{R}^{n-k}$, its section at $y$ is denoted by $E_y = \{x \in \mathbb{R}^k : (x,y) \in E\}$, its projection onto $\mathbb{R}^{n-k}$ is $\pi_{n-k}(E) = \{y \in \mathbb{R}^{n-k} : (x,y) \in E\}$, and the essential projection is $\pi^+_{n-k}(E) = \{y \in \mathbb{R}^{n-k} : (x,y) \in E,\mathcal H^k(E_y)>0\}$.
The distribution function of $u(\cdot,y)$ is $\lambda_u(y,t) = \mathcal{H}^k(\{x \in \mathbb{R}^k : u_0(x,y) > t\})$, $u_0$ is the extension of $u$ by $0$ outside $E$, and $M_u(y) = \inf\{t>0 : \lambda_u(y,t)=0\}$ represents the essential supremum of $u$ at $y$.
In the following, $u$ and $u_0$ are used interchangeably when there is no risk of confusion.

The relevant theorems apply to functions in the space $W^{1,1}_{0,x}(\Omega)$, which consists of functions that are locally Sobolev in the $y$ variable and globally Sobolev with zero trace in the $x$ variable. Formally, for an open set $\Omega\subset\mathbb R^n$, this space is defined as:
\begin{equation}
    W^{1,p}_{0,x}(\Omega)=\{u:\Omega\to\mathbb R:u_0\in W^{1,p}(\mathbb R^k\times\omega),~\forall\omega\Subset\pi_{n-k}(\Omega),~\omega~{\rm open}\},
\end{equation}
where $\Subset$ denotes compact inclusion, {\it i.e.}, for two open sets $\omega$ and $\Omega$, $\omega\Subset\Omega$ indicates $\bar\omega\subset\Omega$ and $\bar\omega$ is compact.

For a set $E$ of finite perimeter in an open set $\Omega \subset \mathbb{R}^n$, the generalized inner normal $\nu^E$ is defined as the derivative of Radon measure $D\chi_E$ with respect to $|D\chi_E|$, and its $x$-component is denoted by $\nu^E_x$. The reduced boundary $\partial^*E$ comprises all points $z \in \Omega$ where $\nu^E(z)$ exists and satisfies $|\nu^E(z)|=1$.

\begin{definition}
    The Steiner rearrangement of a measurable set $E \subset \mathbb{R}^k \times \mathbb{R}^{n-k}$ with respect to the $x$ variable, denoted by $E^\sigma$, is defined as
    \begin{align}
        E^\sigma&=\ab\{(x,y)\in \mathbb R^k\times \mathbb R^{n-k}:y\in\pi_{n-k}^+(E),~|x|^k\leq\frac{\mathcal H^k(E_y)}{\omega_k^{\mathcal H}}\}.
    \end{align}
\end{definition}
\begin{definition}
    Let $u$ be a non-negative measurable function on a set $E \subset \mathbb{R}^k \times \mathbb{R}^{n-k}$ such that its superlevel sets have finite measure on each slice, {\it i.e.}, $\mathcal H^k(\{x\in E_y:u(x,y)>t\})<\infty$ for all $t>0$ and for $\mathcal H^{n-k}$-a.e.~$y\in\pi_{n-k}^+(E)$.
    The Steiner rearrangement $u^\sigma$ of $u$ is defined as
    \begin{align}
        u^\sigma(x,y)&=\inf\ab\{t>0:\lambda_u(y,t)\leq \omega_k^{\mathcal H}|x|^k\}.
    \end{align}
\end{definition}
\begin{definition}
    Let $f:\mathbb R^k\times \mathbb R^{n-k}\to[0,\infty)$. Then, $f$ is said to be {\it radially symmetric} if there exists a function $\tilde f:\mathbb R^{n-k+1}\to[0,\infty)$ such that
    \begin{equation}
        f(x,y)=\tilde f(|x|,y),
    \end{equation}
    for all $(x,y)\in\mathbb R^k\times \mathbb R^{n-k}$.
\end{definition}
\begin{theorem}[Theorem 2.1 of \cite{Capriani2011TheSR}]\label{thm:steiner}
    Let $f:\mathbb R^k\times \mathbb R^{n-k}\to[0,\infty)$ be a non-negative convex function that vanishes at $0$ and is radially symmetric.
    Let $\Omega\subset\mathcal R^n$ be an open set and $u\in W_{0,x}^{1,1}(\Omega)$ be a non-negative function. Then
    \begin{equation}
        \int_{\Omega^\sigma}f(\nabla u^\sigma)\odif[order=n]{z}\leq\int_{\Omega}f(\nabla u)\odif[order=n]{z}.
    \end{equation}
\end{theorem}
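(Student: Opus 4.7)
The plan is to adapt the classical Pólya--Szegő strategy to the Steiner setting, combining slice-wise Schwarz rearrangement, the coarea formula, and Jensen's inequality applied to the convex, radially symmetric integrand $f$. As a first reduction, I would pass from the general $u\in W^{1,1}_{0,x}(\Omega)$ to a smooth, compactly supported approximation: such functions are dense in $W^{1,1}_{0,x}(\Omega)$, Steiner rearrangement is continuous in the natural slice-wise $W^{1,1}$ topology, and $u \mapsto \int f(\nabla u)\,\odif[order=n]{z}$ is lower semicontinuous, so the inequality for smooth $u$ descends to the general case. Sard's theorem then allows one to restrict attention to almost every regular level $t$ of $u$, for which $\{u = t\}$ is a smooth hypersurface and the coarea formula applies pointwise in the standard sense.

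The core analytic step proceeds slice by slice. For fixed $y\in\pi_{n-k}(\Omega)$, the Steiner rearrangement coincides with the Schwarz symmetrization of $u(\cdot,y)$, so it preserves the distribution function $\lambda_u(y,\cdot)$ and renders $|\nabla_x u^\sigma|$ constant on each spherical level set. Differentiating the defining identity $\lambda_u\bigl(y,u^\sigma(x,y)\bigr)=\omega_k^{\mathcal H}|x|^k$ in $y$ yields, at regular points,
\begin{equation}
\nabla_y u^\sigma(x,y)\int_{\{u(\cdot,y)=u^\sigma(x,y)\}}\frac{\odif{\mathcal H^{k-1}}}{|\nabla_x u|}=\int_{\{u(\cdot,y)=u^\sigma(x,y)\}}\frac{\nabla_y u}{|\nabla_x u|}\,\odif{\mathcal H^{k-1}},
\end{equation}
so that $\nabla_y u^\sigma$ is realized as an integral average of $\nabla_y u$ along the level curve of the original slice, while $|\nabla_x u^\sigma|^{-1}$ corresponds to a harmonic-type average of $|\nabla_x u|^{-1}$ rescaled by the Euclidean isoperimetric ratio in $\mathbb R^k$. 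Because $f$ is convex and radially symmetric in its first argument, it has the form $\tilde f(|\xi_x|,\xi_y)$ with $\tilde f$ jointly convex, and Jensen's inequality combined with the isoperimetric inequality converts these averages into the slice-wise bound
\begin{equation}
\int_{\{u^\sigma(\cdot,y)=t\}}\frac{f(\nabla u^\sigma)}{|\nabla_x u^\sigma|}\,\odif{\mathcal H^{k-1}}\leq\int_{\{u(\cdot,y)=t\}}\frac{f(\nabla u)}{|\nabla_x u|}\,\odif{\mathcal H^{k-1}}.
\end{equation}
Integrating in $t$ and $y$ via Fubini and reversing the coarea formula reassembles these bounds into the desired global inequality.

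The main obstacle is the rigorous handling of the critical set $\{\nabla_x u=0\}$ and of slices whose level sets have several disconnected components: in these regions the implicit-function identity above fails, $u^\sigma$ need not be smooth, and the averaging interpretation of $\nabla u^\sigma$ must be justified in the measure-theoretic sense. I would control this via Federer's structure theorem for the critical set together with an exhaustion argument that approximates $u$ by functions whose critical sets can be made arbitrarily small, and then passes to the limit using the lower semicontinuity from the reduction step. This interplay between geometric measure theory and the joint convexity of $f$ is the technical heart of the statement, and explains why the radial-symmetry hypothesis on $f$ is indispensable for bootstrapping the slice-wise Schwarz inequality into the genuine Steiner Pólya--Szegő bound.
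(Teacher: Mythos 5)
There is no internal proof to compare against here: the paper imports this statement verbatim as Theorem~2.1 of Capriani \cite{Capriani2011TheSR} and never proves it. Judged on its own, your sketch reproduces the classical mechanism that does underlie the literature proofs — slice-wise Schwarz rearrangement, the coarea formula, the identity expressing $\nabla_y u^\sigma$ and $|\nabla_x u^\sigma|^{-1}$ as averages over the level curves of $u(\cdot,y)$, and Jensen's inequality for the jointly convex $\tilde f$ combined with the isoperimetric inequality in $\mathbb R^k$; your displayed level-set identity and slice-wise bound are the correct intermediate statements. The gaps are in the two reductions that are supposed to make this rigorous.

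First, the opening reduction invokes ``continuity of Steiner rearrangement in the natural slice-wise $W^{1,1}$ topology.'' That premise is false: by the Almgren--Lieb phenomenon, symmetrization is discontinuous on Sobolev spaces at functions with plateaus. One can partially repair this (mollify, note via Jensen that mollification does not increase $\int f(\nabla u)$, use $L^1$-nonexpansiveness of rearrangement and lower semicontinuity of convex gradient functionals), but then the limit object is a priori only BV, and showing that $Du^\sigma$ has no singular part — i.e.\ that $u^\sigma$ is genuinely Sobolev — is part of what must be proved, not assumed. Second, and more seriously, your handling of the critical set $\{\nabla_x u=0\}$ by ``approximating $u$ by functions whose critical sets can be made arbitrarily small'' does not work: a plateau of positive measure cannot be removed by a small perturbation without losing all control of the rearrangement of the perturbed function (the same discontinuity), and on that set the contribution $f(0,\nabla_y u)$ to the right-hand side must be tracked, not excised, since after symmetrization the corresponding mass is redistributed onto regions where $\nabla_x u^\sigma$ may vanish. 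The actual proofs of Cianchi--Fusco \cite{CIANCHI2006673} and Capriani \cite{Capriani2011TheSR} avoid smooth approximation altogether: they first prove a perimeter inequality for the Steiner symmetrization of the superlevel sets $\{u>t\}$ as sets of finite perimeter, decompose $\partial_t\lambda_u$ and the $y$-derivatives of $\lambda_u$ into absolutely continuous and singular parts, and only then integrate in $t$ and $y$, with the critical set absorbed through the singular parts together with convexity, nonnegativity, and $f(0)=0$. Until your two reductions are replaced by arguments of this measure-theoretic type, the proposal is a correct outline of the mechanism but not a proof.
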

\begin{theorem}[Theorem 2.3 of \cite{Capriani2011TheSR}]\label{thm:equality_cases}
    Let $f:\mathbb R^k\times \mathbb R^{n-k}\to[0,\infty)$ be a non-negative strictly convex function that vanishes at $0$ and is radially symmetric. Let $\Omega\subset\mathcal R^n$ be an open set satisfying
    \begin{itemize}
        \item $\pi_{n-k(\Omega)}$ is connected;
        \item $\Omega$ is bounded in the $x$ directions;
        \item $\Omega$ is of finite perimeter inside $\mathbb R^k\times\pi_{n-k}(\Omega)$;
        \item $\mathcal H^{n-1}(\{(x,y)\in\partial^*\Omega:\nu_x^\Omega=0\}\cap\{\mathbb R^k\times\pi_{n-k}(\Omega)\})=0$.
    \end{itemize}
    Let $u\in W_{0,x}^{1,1}(\Omega)$ be a non-negative function. If
    \begin{equation}
        \int_{\Omega^\sigma}f(\nabla u^\sigma)\odif[order=n]{z}=\int_{\Omega}f(\nabla u)\odif[order=n]{z}<\infty,\label{eq:equality_case}
    \end{equation}
    and $u$ satisfies
    \begin{align}
        \mathcal H^n(&\{(x,y)\in\Omega:\nabla_xu(x,y)=0\}\nonumber\\
        &\cap\{(x,y)\in\Omega:{\rm either}~M_u(y)=0~{\rm or}~u(x,y)<M_u(y)\})=0,
    \end{align}
    then $u^\sigma$ is equivalent to $u$ up to a translation in the $x$-plane.
\end{theorem}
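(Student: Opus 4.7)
The plan is to follow a Brothers--Ziemer style rigidity strategy, adapted from Schwarz symmetrization to the Steiner setting: convert the equality of the two integrals in Eq.~\eqref{eq:equality_case} into pointwise equality on almost every slice and at almost every level, use strict convexity to force each slice-level set to be a translated ball, and finally upgrade the resulting centering map to a single $x$-translation.

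First I would apply a coarea-type decomposition to $\int_\Omega f(\nabla u)\,dz$ slicewise in $y$. Because $f$ is radially symmetric in its first argument and vanishes at $0$, the inner integral $\int_{\Omega_y} f(\nabla u(\cdot,y),\nabla_y u(\cdot,y))\,dx$ can be rewritten in terms of the distribution function $\lambda_u(y,\cdot)$ and an $f$-weighted perimeter of the sections $E_y(t)=\{x\in\Omega_y:u(x,y)>t\}$. This is exactly the route used to establish Theorem~\ref{thm:steiner}, and the same analysis shows that each slicewise integrand dominates its symmetrized counterpart pointwise in $t$. The global equality hypothesis therefore forces slicewise equality for $\mathcal{H}^{n-k}$-a.e.\ $y\in\pi_{n-k}^+(\Omega)$ and $\mathcal{H}^1$-a.e.\ $t>0$. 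Next I would invoke a sharp isoperimetric rigidity statement: for strictly convex, radially symmetric $f$, among sets of prescribed $\mathcal{H}^k$-measure in $\mathbb R^k$ with finite perimeter, balls are the unique (up to translation) minimizers of the $f$-perimeter. The hypotheses that $\Omega$ is bounded in the $x$-directions, has finite perimeter inside $\mathbb R^k\times\pi_{n-k}(\Omega)$, and satisfies $\mathcal H^{n-1}(\{\nu_x^\Omega=0\}\cap\ldots)=0$ are precisely what rules out degenerate slab-like minimizers on the boundary of $\Omega$ and allows this rigidity to be applied slicewise. This yields measurable maps $c(y,t)\in\mathbb R^k$ and $r(y,t)\ge 0$ such that $E_y(t)$ is $\mathcal H^k$-equivalent to the ball $B(c(y,t),r(y,t))$.

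The principal obstacle is then to prove $c(y,t)$ is independent of both $t$ and $y$, so that a single translation $x\mapsto x-x_0$ realizes $u=u^\sigma(\cdot-x_0,\cdot)$ almost everywhere. Independence in $t$ is where the hypothesis on $\{\nabla_x u=0\}\cap\{u<M_u\}$ enters: if $c(y,\cdot)$ were non-constant on some sub-interval of $(0,M_u(y))$, then the nested balls $E_y(t)$ would shift as $t$ varies, and a direct computation shows this forces $\nabla_x u$ to vanish on a set of positive $\mathcal H^n$-measure strictly below the slice-supremum, contradicting the assumption. This is the exact analogue of the flat-region argument of Brothers--Ziemer, but more delicate for Steiner symmetrization because one must simultaneously control a measurable selection of centers $c(\cdot,\cdot)$. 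Independence in $y$ would then follow from the connectedness of $\pi_{n-k}(\Omega)$ combined with the $W_{0,x}^{1,1}$ regularity of $u$: a nontrivial jump in $y\mapsto c(y)$ would violate the Sobolev regularity across the lateral boundary (using the zero trace in $x$ and the finite perimeter of $\Omega$), so $c$ must be constant on the connected set $\pi_{n-k}(\Omega)$.

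I expect the genuinely hard step to be the rigorous passage from slicewise $f$-perimeter equality to the statement that $E_y(t)$ is a translated ball with a measurable center $c(y,t)$, together with controlling the $\mathcal H^{n-k}\otimes\mathcal H^1$-null set of exceptional $(y,t)$ where this geometry may degenerate. The conditions on $\Omega$ (boundedness in $x$, finite perimeter, and $\nu_x^\Omega\neq 0$ on $\mathcal H^{n-1}$-a.e.\ of $\partial^*\Omega$) have no obvious redundancy and are plugged in precisely to exclude the known counterexamples, so I do not see a shortcut around the measurable selection and De Giorgi-type blow-up analysis that underlies Capriani's original proof.
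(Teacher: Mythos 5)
This statement is not proven in the paper at all: it is quoted verbatim as Theorem 2.3 of Capriani \cite{Capriani2011TheSR}, so there is no in-paper argument to compare your proposal against --- what you have written is an attempted reconstruction of the cited proof (which itself builds on the codimension-one analysis of Cianchi and Fusco \cite{CIANCHI2006673}).

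Judged on its own terms, your outline follows the right Brothers--Ziemer-type line (slicewise analysis, rigidity for the sections, then constancy of the centers), but two steps are blurred in a way that matters. First, the slicewise reduction is not a pure $f$-weighted isoperimetric problem in $\mathbb R^k$: since $f$ couples $\nabla_x u$ with $\nabla_y u$ (it is radial only in the $x$-slot), the proof of the inequality, and hence the equality analysis, runs through the coarea formula on each slice combined with the Euclidean isoperimetric inequality for the level sets \emph{and} Jensen's inequality applied to averages of $\nabla_y u$ and $|\nabla_x u|^{-1}$ over the level surfaces. Equality in \eqref{eq:equality_case} therefore forces, in addition to the sections $\{x: u(x,y)>t\}$ being balls, the constancy of $|\nabla_x u|$ and of $\nabla_y u$ on $\mathcal H^{k-1}$-a.e.\ level surface of a.e.\ slice; this is exactly where strict convexity enters, and your sketch omits it, yet it is the input needed for the later argument that the centers cannot move. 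Second, constancy of the center in $y$ is not a soft consequence of connectedness plus ``Sobolev regularity across the lateral boundary'': continuously drifting centers must be excluded as well, and the actual mechanism is that the equality conditions, together with the hypotheses $\nabla_x u\neq0$ a.e.\ where $u<M_u(y)$ and $\nu_x^\Omega\neq0$ for $\mathcal H^{n-1}$-a.e.\ point of $\partial^*\Omega$ over $\pi_{n-k}(\Omega)$, allow one to show that the center map is weakly differentiable in $y$ with vanishing derivative; connectedness of $\pi_{n-k}(\Omega)$ then upgrades this to a single translation. Relatedly, your attribution of the hypotheses is off: the plateau condition and the no-vertical-boundary condition are there to exclude flat regions and cylindrical boundary pieces along which level sets can be re-centered at zero cost, whereas a smoothly drifting center is already penalized by the extra $y$-derivative it creates under strict convexity. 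So the proposal is a reasonable roadmap, but the Jensen-equality step, the measurable selection of centers, and the differentiation argument for the center map --- the substance of Capriani's proof --- are not supplied.
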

\subsubsection{Generalization to functions vanishing at infinity}
The original theorems are formulated for functions in $W^{1,1}_{0,x}(\Omega)$, which is a subset of $L^1(\Omega)$. However, the functions relevant to our physical problem are not necessarily in $L^1(\mathbb{R}^{D-1} \times S^1)$ and are required only to vanish at infinity in the $x$ directions. It is therefore necessary to extend these theorems to accommodate such functions. To this end, we introduce a localized version of $W^{1,p}_{0,x}(\Omega)$ as
\begin{equation}
    W^{1,p}_{0,x,{\rm loc}}(\Omega)=\{u\in W_{\rm loc}^{1,p}(\Omega):\forall \varphi\in C^\infty_c(\Omega),\varphi u\in W^{1,p}_{0,x}(\Omega)\},
\end{equation}
and we say that a function $u\in W^{1,p}_{0,x,{\rm loc}}(\Omega)$ vanishes at infinity in the $x$ directions if it satisfies
\begin{equation}
    \forall \epsilon>0,~\exists~{\rm compact}~K\subset \mathbb R^k~\text{\rm s.t.}~|u_0(z)|<\epsilon~\text{\rm for }\mathcal H^n\text{\rm -a.e. } z\not\in (K\times\pi_{n-k}(\Omega)).
\end{equation}

\begin{corollary}\label{cor:steiner}
    Let $f:\mathbb R^k\times \mathbb R^{n-k}\to[0,\infty)$ be a non-negative convex function that vanishes at $0$ and is radially symmetric.
    Let $\Omega\subset\mathcal R^n$ be an open set and $u\in W^{1,p}_{0,x,\rm loc}(\Omega)$ be a non-negative function vanishing at infinity in the $x$ directions.
    Then
    \begin{equation}
        \int_{\Omega^\sigma}f(\nabla u^\sigma)\odif[order=n]{z}\leq\int_{\Omega}f(\nabla u)\odif[order=n]{z}.\label{eq:cor_steiner}
    \end{equation}
\end{corollary}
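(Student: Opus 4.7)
The plan is to reduce Corollary~\ref{cor:steiner} to Theorem~\ref{thm:steiner} by a truncation argument. Since $f\geq0$, the inequality is trivial when $\int_\Omega f(\nabla u)\odif[order=n]{z}=\infty$, so I assume this integral is finite. For each $\epsilon>0$, set $u_\epsilon=(u-\epsilon)_+$. The hypothesis that $u$ vanishes at infinity in the $x$ directions implies $\{u_\epsilon>0\}$ is $\mathcal H^n$-a.e.\ contained in $K_\epsilon\times\pi_{n-k}(\Omega)$ for some compact $K_\epsilon\subset\mathbb R^k$, so $u_\epsilon$ has bounded support in the $x$ directions. Standard chain-rule properties of weak derivatives give $\nabla u_\epsilon=\chi_{\{u>\epsilon\}}\nabla u$ a.e., and the layer-cake characterization of Steiner rearrangement yields $u_\epsilon^\sigma=(u^\sigma-\epsilon)_+$ together with $\nabla u_\epsilon^\sigma=\chi_{\{u^\sigma>\epsilon\}}\nabla u^\sigma$ a.e.

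The main technical step is to verify that $u_\epsilon\in W^{1,1}_{0,x}(\Omega)$, so that Theorem~\ref{thm:steiner} becomes applicable. For an arbitrary open $\omega\Subset\pi_{n-k}(\Omega)$, I would choose cutoffs $\eta\in C_c^\infty(\pi_{n-k}(\Omega))$ with $\eta=1$ on $\bar\omega$, and $\chi\in C_c^\infty(\mathbb R^k)$ with $\chi=1$ on $K_\epsilon$. The product $\chi(x)\eta(y)$ lies in $C_c^\infty(\Omega)$ for the cylindrical domain $\mathbb R^{D-1}\times S^1$ relevant to our application (and more generally after a partition-of-unity refinement), so $\chi\eta u\in W^{1,p}_{0,x}(\Omega)$ by the $W^{1,p}_{0,x,{\rm loc}}$ hypothesis. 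The identity $\chi\eta u_\epsilon=(\chi\eta u-\epsilon\chi\eta)_+$, combined with the stability of $W^{1,p}_{0,x}$ under positive-part truncation, then yields $\chi\eta u_\epsilon\in W^{1,1}_{0,x}(\Omega)$; and since $\chi\eta u_\epsilon=u_\epsilon$ on $\Omega\cap(\mathbb R^k\times\omega)$, I obtain $(u_\epsilon)_0\in W^{1,1}(\mathbb R^k\times\omega)$. As $\omega$ was arbitrary, this shows $u_\epsilon\in W^{1,1}_{0,x}(\Omega)$.

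With this in place, Theorem~\ref{thm:steiner} applied to $u_\epsilon$ gives
\begin{equation}
    \int_{\Omega^\sigma}f(\nabla u_\epsilon^\sigma)\odif[order=n]{z}\leq\int_\Omega f(\nabla u_\epsilon)\odif[order=n]{z}\leq\int_\Omega f(\nabla u)\odif[order=n]{z},
\end{equation}
where the last inequality uses the pointwise bound $f(\nabla u_\epsilon)\leq f(\nabla u)$, which follows from $f\geq0$, $f(0)=0$, and the gradient identity above. Letting $\epsilon\to0^+$, $\nabla u_\epsilon$ and $\nabla u_\epsilon^\sigma$ converge $\mathcal H^n$-a.e.\ to $\nabla u$ and $\nabla u^\sigma$ respectively (using that weak gradients of Sobolev functions vanish a.e.\ on their zero sets), and Fatou's lemma applied to the left-hand side closes the argument. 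I expect the main obstacle to be the verification $u_\epsilon\in W^{1,1}_{0,x}(\Omega)$, since the zero-trace property along $\partial_x\Omega$ must be preserved under the combined $x$- and $y$-cutoffs and the positive-part truncation; the construction of $\chi(x)\eta(y)\in C_c^\infty(\Omega)$ is sensitive to the geometry of $\Omega$, and the cylindrical structure of our domain simplifies matters considerably.
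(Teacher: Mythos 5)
Your proposal follows essentially the same route as the paper: truncate to $u_\epsilon=\max(u-\epsilon,0)$, use the vanishing-at-infinity hypothesis to place $u_\epsilon$ in $W^{1,1}_{0,x}(\Omega)$ (which you verify in more detail, via cutoffs, than the paper does), exploit the commutation $(u_\epsilon)^\sigma=(u^\sigma)_\epsilon$ and the gradient identities, apply the Steiner P\'olya--Szeg\"o inequality of Theorem \ref{thm:steiner} to $u_\epsilon$, and pass to the limit $\epsilon\to0^+$. The only difference is cosmetic: you close the limit with the monotone bound $f(\nabla u_\epsilon)\leq f(\nabla u)$ and Fatou's lemma, whereas the paper invokes the slice-localized inequality (Eq.~(4.13) of \cite{Capriani2011TheSR}) together with monotone convergence, thereby obtaining a slightly stronger statement on every Borel set $B\subset\pi_{n-k}(\Omega)$; your version suffices for the corollary as stated.
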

\begin{proof}
    The proof adapts the argument from Remark 4.5 of \cite{Capriani2011TheSR}.
    For every $\epsilon>0$, we define
    \begin{equation}
        u_\epsilon=\max(u-\epsilon,0).
    \end{equation}
    Then, there exists compact $K\subset \mathbb R^k$ such that $\esssupp(u_\epsilon)\subset K\times\pi_{n-k}(\Omega)$ and thus $u_\epsilon\in W_{0,x}^{1,1}(\Omega)$. We have $(u_\epsilon)^\sigma=(u^\sigma)_\epsilon$ and $\nabla u_\epsilon=(\nabla u)\chi_{\{z:u(z)>\epsilon\}}$ $\mathcal H^n$-a.e.~in $\mathbb R^n$. By Eq.~(4.13) of \cite{Capriani2011TheSR} and the monotone convergence theorem, we obtain
    \begin{align}
        \int_{\mathbb R^k\times B} f(\nabla u^\sigma)\odif[order=n]{z}&=\lim_{\epsilon\to0^+}\int_{\mathbb R^k\times B} f(\nabla (u^\sigma)_\epsilon)\odif[order=n]{z}\nonumber\\
        &=\lim_{\epsilon\to0^+}\int_{\mathbb R^k\times B} f(\nabla (u_\epsilon)^\sigma)\odif[order=n]{z}\nonumber\\
        &\leq\lim_{\epsilon\to0^+}\int_{\mathbb R^k\times B} f(\nabla u_\epsilon)\odif[order=n]{z}\nonumber\\
        &=\int_{\mathbb R^k\times B} f(\nabla u)\odif[order=n]{z},
    \end{align}
    for every Borel set $B\subset \pi_{n-k}(\Omega)$. This is a stronger statement than Eq.~\eqref{eq:cor_steiner}.
%\qed
\end{proof}

\begin{corollary}\label{cor:equality_cases}
    Let $\Omega\subset\mathcal R^n$ be an open set, and $f:\mathbb R^k\times \mathbb R^{n-k}\to[0,\infty)$ be a non-negative strictly convex function that vanishes at $0$ and is radially symmetric.
    Let $u\in W^{1,p}_{0,x,\rm loc}(\Omega)$ be a non-negative function vanishing at infinity in the $x$ directions.
    If $u$ satisfies
    \begin{align}
        \mathcal H^n(&\{(x,y)\in \{u>0^+\}^{(1)}:\nabla_x u(x,y)=0\}\nonumber\\
        &\cap\{(x,y)\in \{u>0^+\}^{(1)}:u(x,y)<M_u(y)\})=0,\label{eq:eqality_cond1}
    \end{align}
    and
    \begin{equation}
        \int_{\Omega^\sigma} f(\nabla u^\sigma)\odif[order=n]{z}=\int_\Omega f(\nabla u)\odif[order=n]{z}<\infty,
    \end{equation}
    then, for each connected domain of $\pi_{n-k}(\{u>0^+\}^{(1)})$,
    $u$ is equivalent to $u^\sigma$ up to a translation in the $x$-plane.
\end{corollary}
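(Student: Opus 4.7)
The strategy is to reduce to the hypotheses of Theorem~\ref{thm:equality_cases} by truncating $u$ from below, exploiting that the vanishing-at-infinity hypothesis forces the truncation to have compact essential support in the $x$ directions. Throughout, fix a connected component $C$ of $\pi_{n-k}(\{u>0^+\}^{(1)})$ and work on the cylinder $\mathbb{R}^{k}\times C$. For each $\epsilon>0$, introduce the splitting $u=u_\epsilon+w_\epsilon$ with $u_\epsilon=\max(u-\epsilon,0)$ and $w_\epsilon=\min(u,\epsilon)$. The vanishing-at-infinity hypothesis implies that $\esssupp(u_\epsilon)\subset K_\epsilon\times\bar C$ for some compact $K_\epsilon\subset\mathbb R^k$, so $u_\epsilon$ is admissible for the original Capriani theorem on suitable open sets, whereas $w_\epsilon$ and $u$ itself are not.

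The next step is to transfer the global equality hypothesis to $u_\epsilon$. Standard properties of Steiner rearrangement give $(u_\epsilon)^\sigma=(u^\sigma)_\epsilon$ and $(w_\epsilon)^\sigma=(w^\sigma)_\epsilon=\min(u^\sigma,\epsilon)$, while $\nabla u_\epsilon$ and $\nabla w_\epsilon$ have disjoint supports on $\{u>\epsilon\}$ and $\{u\le\epsilon\}$ respectively. Since $f(0)=0$, this yields
\begin{equation}
    \int f(\nabla u)\odif[order=n]{z}=\int f(\nabla u_\epsilon)\odif[order=n]{z}+\int f(\nabla w_\epsilon)\odif[order=n]{z},
\end{equation}
with the analogous identity for $u^\sigma$. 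Applying Corollary~\ref{cor:steiner} to both $u_\epsilon$ and $w_\epsilon$, the assumed equality for $u$ forces equality separately for each piece, in particular for $u_\epsilon$.

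To invoke Theorem~\ref{thm:equality_cases}, I take $\Omega_\epsilon$ to be the connected components of $\{u>\epsilon\}^{(1)}\cap(\mathbb R^k\times C)$; by the coarea formula, for $\mathcal H^1$-a.e.\ $\epsilon>0$ the set $\{u>\epsilon\}$ has finite perimeter, and it is bounded in $x$ by the vanishing-at-infinity property. On each such component, $\pi_{n-k}(\Omega_\epsilon)\subset C$ is open and connected by construction, and $u_\epsilon|_{\Omega_\epsilon}$ lies in $W^{1,1}_{0,x}$ because $u-\epsilon$ has zero trace on $\partial^*\{u>\epsilon\}$. The non-degeneracy condition on $\nabla_x u_\epsilon$ in Theorem~\ref{thm:equality_cases} is the same as the hypothesis (\ref{eq:eqality_cond1}) restricted to $\{u>\epsilon\}^{(1)}\subset\{u>0^+\}^{(1)}$, since $M_{u_\epsilon}(y)=M_u(y)-\epsilon$ on this set and $u_\epsilon(x,y)<M_{u_\epsilon}(y)$ iff $u(x,y)<M_u(y)$. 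The boundary-normal condition follows from the coarea representation $\nu^{\Omega_\epsilon}\propto-\nabla u/|\nabla u|$ on $\partial^*\Omega_\epsilon$: the set where $\nu^{\Omega_\epsilon}_x=0$ coincides $\mathcal H^{n-1}$-a.e.\ with $\{\nabla_x u=0\}\cap\{u=\epsilon\}$, and the latter is contained in $\{u<M_u(y)\}$ for a.e.\ $\epsilon$, so it is null by (\ref{eq:eqality_cond1}). Theorem~\ref{thm:equality_cases} then yields that, on each such component, $u_\epsilon$ coincides with $(u^\sigma)_\epsilon$ up to some translation $v_\epsilon\in\mathbb R^k$.

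Finally, I pass to the limit $\epsilon\to 0^+$. The center of radial symmetry of a non-trivial radially symmetric function is unique; thus if $0<\epsilon'<\epsilon$ are both in the good set, the truncation $u_\epsilon$ of $u_{\epsilon'}$ must share the same center, so $v_\epsilon=v_{\epsilon'}=:v$ is constant along a sequence $\epsilon_n\to 0$. Since the various components $\Omega_\epsilon$ exhaust $\mathbb R^k\times C$ as $\epsilon\to 0$ and share this single center (they are all slices of the same would-be radial profile), monotone convergence of $u_\epsilon\to u$ and $(u^\sigma)_\epsilon\to u^\sigma$ yields $u(x+v,y)=u^\sigma(x,y)$ for $\mathcal H^n$-a.e.\ $(x,y)\in\mathbb R^k\times C$, which is the desired conclusion. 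The main obstacle is the geometric step verifying the four hypotheses of Theorem~\ref{thm:equality_cases} on $\Omega_\epsilon$ — in particular, justifying the finite-perimeter property and the reduced-boundary condition — for which one must select a generic $\epsilon$ via coarea and carefully identify the inner normal with $\nabla u/|\nabla u|$ in a measure-theoretic sense.
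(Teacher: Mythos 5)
Your overall strategy --- truncate at level $\epsilon$, verify the hypotheses of Theorem \ref{thm:equality_cases} for $\mathcal H$-a.e.\ $\epsilon$ via the coarea formula and the identification of $\nu^{\Omega_\epsilon}_x$ with $\nabla_x u$, then let $\epsilon\to0^+$ --- is the same as the paper's, and your explicit splitting $u=u_\epsilon+w_\epsilon$ to transfer the global equality to the truncation is a nice extra detail. However, there is a genuine gap in how you localize the equality before invoking Theorem \ref{thm:equality_cases}: you decompose $\{u>\epsilon\}^{(1)}\cap(\mathbb R^k\times C)$ into connected components $\Omega_\epsilon$ of the \emph{set}, whereas the paper decomposes it into pieces $U^i_\epsilon$ whose \emph{projections} $\pi_{n-k}(U^i_\epsilon)$ are the connected components of $\pi_{n-k}(\{u>\epsilon\}^{(1)})$. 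The difference is not cosmetic: two distinct set-components can have overlapping $y$-projections. In that case (i) the Steiner rearrangement is taken slice-wise in $y$, so $(u_\epsilon)^\sigma$ over that $y$-region mixes the slices of both components and is \emph{not} the rearrangement of $u_\epsilon|_{\Omega_\epsilon}$; hence the theorem's conclusion cannot be read as ``$u_\epsilon$ coincides with $(u^\sigma)_\epsilon$ on $\Omega_\epsilon$'' as you state; and (ii) the equality hypothesis $\int f(\nabla(u_\epsilon|_{\Omega_\epsilon})^\sigma)=\int_{\Omega_\epsilon}f(\nabla u_\epsilon)$ cannot be extracted from the global equality, because the inequality you would need,
\begin{equation*}
    \int f\bigl(\nabla(u_\epsilon)^\sigma\bigr)\ \geq\ \sum_j\int f\bigl(\nabla(u_\epsilon|_{\Omega^j_\epsilon})^\sigma\bigr),
\end{equation*}
is false in general: merging bumps within a slice typically \emph{lowers} the gradient integral (already for $k=1$, two disjoint unit triangles give $\sum\int|u'|^2=4$ while their joint rearrangement gives $1$). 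Your disjoint-support distribution argument is valid only across pieces with disjoint $y$-projections (or across the level splitting into $u_\epsilon$ and $w_\epsilon$), which is precisely why the paper groups by connected components of the projection: on each such group the restriction of $(u_\epsilon)^\sigma$ \emph{is} the rearrangement of the grouped function, the group's projection is connected as the theorem requires, and disconnected slices are excluded only \emph{a posteriori} by the theorem's conclusion.

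The remaining steps (zero trace of the truncations, redundancy of the $M_{u_\epsilon}=0$ clause, the reduced-boundary normal condition via Capriani's Theorem 3.3, and the passage $\epsilon\to0^+$) match the paper. Note, though, that your final claim that all components at a fixed small $\epsilon$ ``share this single center'' also leans on the projection-based decomposition together with the nesting of superlevel sets; with your set-component decomposition it would require a separate argument.
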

\begin{proof}
    For any $\epsilon > 0$, the set $\{u > \epsilon\}^{(1)}$ is open by definition and bounded in the $x$ directions, as there exists a compact set $K\subset\mathbb R^k$ such that $\{u>\epsilon\}^{(1)}\subset K\times\pi_{n-k}(\Omega)$. Furthermore, since $u$ is a Sobolev function, the coarea formula guarantees that $\{u > \epsilon\}^{(1)}$ has finite perimeter for a.e.~$\epsilon > 0$.

    We decompose the set $\{u>\epsilon\}^{(1)}$ as $\{u>\epsilon\}^{(1)} = \bigcup_i U_\epsilon^i$, where each $U_\epsilon^i$ is selected so that $\pi_{n-k}(U_\epsilon^i)$ constitutes a non-empty, maximal connected subset of $\pi_{n-k}(\{u>\epsilon\}^{(1)})$. Since $U_\epsilon^i$ is a non-empty open set, it follows that $M_u(y)>\epsilon$ for $\mathcal H^{n-k}$-a.e.~$y\in\pi_{n-k}(U_\epsilon^i)$.

    Define
    \begin{equation}
        u_\epsilon^i = \max(u - \epsilon, 0) \chi_{U_\epsilon^i}.
    \end{equation}
    Then, the interior approximate limit of $u_\epsilon^i$ at the boundary of $U_\epsilon^i$ is zero, implying that $u_\epsilon^i$ has zero trace in the $x$ directions. Consequently, $u_\epsilon^i\in W_{0,x}^{1,1}(U_\epsilon^i)$.

    Since $\nabla_xu_\epsilon^i(z)=\nabla_xu(z)$ for $\mathcal H^n$-a.e.~$z\in U_\epsilon^i$, Eq.~\eqref{eq:eqality_cond1} implies
    \begin{align}
        \mathcal H^n(&\{(x,y)\in U_\epsilon^i:\nabla_xu_\epsilon^i(x,y)=0\}\nonumber\\
        &\cap\{(x,y)\in U_\epsilon^i:{\rm either}~M_{u_\epsilon^i}(y)=0~{\rm or}~u_\epsilon^i(x,y)<M_{u_\epsilon^i}(y)\})=0.
    \end{align}
    The first condition inside the left set definition is redundant because $M_{u_\epsilon^i}(y)>0$ holds for $\mathcal H^{n-k}$-a.e.~$y\in \pi_{n-k}(U_\epsilon^i)$ as mentioned above.

    Furthermore, to apply Theorem \ref{thm:equality_cases}, we must verify that the condition on the normal vector holds on the boundary. By definition, the interior approximate limit of $u$ at the boundary of $U_\epsilon^i$ is $\epsilon$. Since $M_u(y) > \epsilon$ for a.e.~$y \in \pi_{n-k}(U_\epsilon^i)$, it follows that $u(x,y) < M_u(y)$ for $\mathcal{H}^{n-1}$-a.e.~$(x,y)\in\partial^* U_\epsilon^i \cap (\mathbb{R}^k \times \pi_{n-k}(U_\epsilon^i))$. The condition Eq.~\eqref{eq:eqality_cond1} then implies that $\nabla_x u(z) \neq 0$ for $\mathcal{H}^{n-1}$-a.e.~$z$ on this part of the boundary and for $\mathcal H$-a.e.~$\epsilon>0$. According to Theorem 3.3 of \cite{Capriani2011TheSR}, the normal vector component $\nu_x^{U_\epsilon^i}$ is proportional to $\nabla_x u$ for $\mathcal{H}^{n-1}$-a.e.~$z$ on the boundary and for $\mathcal{H}$-a.e.~$\epsilon>0$. Consequently, for $\mathcal H$-a.e.~$\epsilon>0$, we have
    \begin{equation}
        \mathcal H^{n-1}(\{(x,y)\in\partial^*U_\epsilon^i:\nu_x^{U_\epsilon^i}=0\}\cap\{\mathbb R^k\times\pi_{n-k}(U_\epsilon^i)\})=0.
    \end{equation}

    The preceding arguments establish that for $\mathcal{H}$-a.e.~$\epsilon > 0$, the function $u_\epsilon^i$ and the set $U_\epsilon^i$ satisfy all the hypotheses of Theorem \ref{thm:equality_cases}. Consequently, for each component $U_\epsilon^i$, the function $u_\epsilon^i$ must be equivalent to its Steiner rearrangement $(u_\epsilon^i)^\sigma$ up to a translation in the $x$-plane. Since this holds for $\mathcal{H}$-a.e.~$\epsilon > 0$, the equivalence extends from the truncated functions $u_\epsilon^i$ to the original function $u$: On each connected component of $\pi_{n-k}(\{u>0^+\}^{(1)})$, $u$ is equivalent to $u^\sigma$ up to a translation in the $x$-plane.
    %\qed
    \end{proof}

    \subsubsection{Application to the reduced problem}

    To apply Corollaries \ref{cor:steiner} and \ref{cor:equality_cases} to the reduced problem, we identify $\phi$ with a function on $\mathbb{R}^{D-1} \times (0, \beta)$ by removing a point from $S^1$. Specifically, if the projection $\pi_{S^1}(\{\phi>0^+\}^{(1)})$ does not cover the entire circle, we choose this point from its complement; otherwise, any point suffices. We will henceforth use this identification implicitly. Under this identification, $\phi$ belongs to $W_{\rm loc}^{1,2}(\mathbb{R}^{D-1}\times (0,\beta))$ and vanishes at spatial infinity. We note that the zero trace condition in the spatial directions is inherent. The Steiner symmetrization, denoted by $\cdot^\sigma$, is performed with respect to the spatial coordinates $x \in \mathbb{R}^{D-1}$. This operation leaves the domain invariant and preserves the periodicity in the $\tau$ direction.

    \begin{lemma}
        Let $\phi \in W^{1,2}_{\rm loc}(\mathbb{R}^{D-1}\times S^1)$ be a non-negative function vanishing at spatial infinity with $\mathcal{T}[\phi] < \infty$ and $\mathcal{K}[\phi] < \infty$. Then $\phi^\sigma$ belongs to $W^{1,2}_{\rm loc}(\mathbb{R}^{D-1}\times S^1)$.
    \end{lemma}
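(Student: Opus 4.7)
The plan is to verify the defining conditions of $W^{1,2}_{\rm loc}(\mathbb R^{D-1}\times S^1)$ for $\phi^\sigma$: local $L^2$ integrability of $\phi^\sigma$ itself and global $L^2$ bounds on its weak spatial and temporal derivatives. Following the identification of $S^1$ with $(0,\beta)$ given just before the lemma, Corollary~\ref{cor:steiner} applies on $\Omega=\mathbb R^{D-1}\times(0,\beta)$. Periodicity of $\phi^\sigma$ in $\tau$ is automatic, since the rearrangement acts only on $x$ and the slicewise distribution function $\lambda_\phi(\tau,\cdot)$ inherits periodicity from $\phi$; hence $\phi^\sigma$ descends to the cylinder and its weak $\tau$-derivative on the interval extends consistently to $S^1$.

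The two gradient bounds reduce to two applications of Corollary~\ref{cor:steiner} with carefully chosen radial convex integrands. Taking $f(\xi)=\tfrac{1}{2}|\xi_x|^2$, which depends only on $|\xi_x|$, vanishes at the origin, and is convex --- hence radial in the sense of the corollary --- yields $\int|\nabla_x\phi^\sigma|^2\,\odif[order=D]{z}\leq 2\mathcal T[\phi]<\infty$. Taking $f(\xi)=\tfrac{1}{2}\xi_\tau^2$, which depends only on the $\tau$-component and is therefore trivially radial in $\xi_x$, yields $\int|\partial_\tau\phi^\sigma|^2\,\odif[order=D]{z}\leq 2\mathcal K[\phi]<\infty$. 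The very existence of these weak derivatives of $\phi^\sigma$ is built into the proof of the corollary via the truncations $\phi_\epsilon=\max(\phi-\epsilon,0)$: these have spatially compact support by the vanishing-at-infinity hypothesis, so $(\phi_\epsilon)^\sigma=(\phi^\sigma)_\epsilon$ lie in $W^{1,2}_{0,x}(\Omega)$, and monotone convergence $(\phi^\sigma)_\epsilon\nearrow\phi^\sigma$ together with the uniform $L^2$-gradient bounds transfers the weak derivatives to $\phi^\sigma$.

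For the local $L^2$ integrability of $\phi^\sigma$ itself, I would invoke the slicewise equimeasurability of Steiner rearrangement, $\|\phi^\sigma(\cdot,\tau)\|_{L^p(\mathbb R^{D-1})}=\|\phi(\cdot,\tau)\|_{L^p(\mathbb R^{D-1})}$ for every $p\in[1,\infty]$. Since $D>3$, the Gagliardo--Nirenberg--Sobolev inequality on $\mathbb R^{D-1}$ gives $\|\phi(\cdot,\tau)\|_{L^{2(D-1)/(D-3)}}\leq C\|\nabla_x\phi(\cdot,\tau)\|_{L^2}$ for $\mathcal H^1$-a.e. $\tau$; integrating the square over $\tau\in S^1$ and using $\mathcal T[\phi]<\infty$ produces a global $L^{2(D-1)/(D-3)}$ control on $\phi^\sigma$, from which Hölder's inequality on any compact $K\times I\subset\mathbb R^{D-1}\times S^1$ extracts the required local $L^2$ estimate. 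The main technical obstacle I anticipate is ensuring that the weak $L^2$ limits produced by the truncation procedure genuinely coincide with the distributional derivatives of $\phi^\sigma$ on the full cylinder; this is a standard test-function argument, leveraging the monotone convergence $(\phi^\sigma)_\epsilon\to\phi^\sigma$ and the uniform gradient bounds to extract weakly convergent subsequences whose limits identify with the sought derivatives.
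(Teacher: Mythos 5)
Your proposal is correct and shares its core machinery with the paper's proof: both rely on the truncations $\phi_\epsilon=\max(\phi-\epsilon,0)$, the identity $(\phi_\epsilon)^\sigma=(\phi^\sigma)_\epsilon$ together with the compact spatial support furnished by vanishing at infinity, and Corollary~\ref{cor:steiner} to bound the symmetrized gradient terms (the paper uses the single integrand $f(\nabla\phi)=|\nabla\phi|^2$, you use $|\nabla_x\phi|^2$ and $|\partial_\tau\phi|^2$ separately; this is immaterial). Where you genuinely diverge is the local $L^2$ bound on $\phi^\sigma$ itself: the paper argues elementarily, noting $\phi_\epsilon\in L^2(\mathbb R^{D-1}\times S^1)$, hence $(\phi^\sigma)_\epsilon\in L^2$ by equimeasurability, and then on any $U\Subset\mathbb R^{D-1}\times S^1$ bounds $\phi^\sigma\leq(\phi^\sigma)_\epsilon+\epsilon$ via Minkowski's inequality. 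You instead invoke slicewise equimeasurability plus the Gagliardo--Nirenberg--Sobolev inequality on $\mathbb R^{D-1}$ to get a global $L^{2(D-1)/(D-3)}$ bound controlled by $\mathcal T[\phi]$, then apply H\"older locally. Your route is valid (note $D>3$ gives $D-1>2$, and the exponent $2(D-1)/(D-3)$ is exactly the quantity $C$ used later in the existence proof), but it carries extra obligations you should make explicit: a Fubini-type argument that for $\mathcal H$-a.e.\ $\tau$ the slice $\phi(\cdot,\tau)$ is a Sobolev function on $\mathbb R^{D-1}$ whose gradient is the restriction of $\nabla_x\phi$, and that the slice vanishes at infinity (otherwise GNS fails, e.g.\ for constants); both follow from the hypotheses, the latter by applying Eq.~\eqref{eq:vanish_at_infinity} along a sequence $\epsilon_k\to0$ and Fubini. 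The paper's Minkowski argument buys simplicity and avoids Sobolev embeddings entirely; your argument buys a stronger global integrability estimate on $\phi^\sigma$ as a by-product. Your attention to identifying the weak derivatives of $\phi^\sigma$ via the truncation limits, and to periodicity in $\tau$, is sound and is at least as careful as the paper's implicit treatment inside Corollary~\ref{cor:steiner}.
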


    \begin{proof}
        For any $\epsilon > 0$, define the truncated function
        \begin{equation}
            \phi_\epsilon = \max(\phi - \epsilon, 0).
        \end{equation}
    Given that $\esssupp(\phi_\epsilon)\subset K\times S^1$ for some compact $K\subset \mathbb R^{D-1}$, it follows that $\phi_\epsilon\in L^2(\mathbb R^{D-1}\times S^1)$. Consequently, its Steiner symmetrization $(\phi_\epsilon)^\sigma$ also belongs to $L^2(\mathbb R^{D-1}\times S^1)$. Applying the identity $(\phi_\epsilon)^\sigma=(\phi^\sigma)_\epsilon$ and the Minkowski inequality, for an arbitrary open set $U\Subset\mathbb R^{D-1}\times S^1$, we obtain:
    \begin{align}
        \int_U|\phi^\sigma|^2\odif[order=D]{z}&\leq\int_U|(\phi^\sigma)_
        \epsilon+\epsilon|^2\odif[order=D]{z}\nonumber\\
        &\leq\ab[\ab(\int_U|(\phi^\sigma)_
        \epsilon|^2\odif[order=D]{z})^{1/2}+\epsilon\ab(\int_U\odif[order=D]{z})^{1/2}]^2<\infty.
    \end{align}
    Furthermore, by applying Corollary \ref{cor:steiner} with $f(\nabla\phi)=|\nabla\phi|^2$, we find that
    \begin{equation}
        \int_U|\nabla \phi^\sigma|^2\odif[order=D]{z}\leq\int|\nabla \phi^\sigma|^2\odif[order=D]{z}\leq\int|\nabla \phi|^2\odif[order=D]{z}<\infty.
    \end{equation}
    Thus, $\phi^\sigma$ belongs to $W^{1,2}_{\rm loc}(\mathbb R^{D-1}\times S^1)$.
%\qed
\end{proof}

\begin{lemma}\label{lem:symmetry}
    There exists a minimizing sequence such that $\phi_n$ is $\mathcal H^D$-a.e.~spherically symmetric and monotonic with respect to $x$ for all $n$.
\end{lemma}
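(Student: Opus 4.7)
The plan is to take the minimizing sequence constructed in the previous subsections---consisting of non-negative functions with connected measure-theoretical support that vanish at spatial infinity---and replace each element $\phi_n$ by its Steiner rearrangement $\phi_n^\sigma$ in the spatial variables $x\in\mathbb R^{D-1}$, using the identification of $\mathbb R^{D-1}\times S^1$ with $\mathbb R^{D-1}\times(0,\beta)$ described in the beginning of this subsection. By construction, each $\phi_n^\sigma$ is then $\mathcal H^D$-a.e. spherically symmetric in $x$ and monotonically non-increasing in $|x|$, so the remaining task is simply to verify that $\{\phi_n^\sigma\}$ is still a minimizing sequence for the reduced problem.

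First I would check that each $\phi_n^\sigma$ is an admissible competitor. Membership in $W^{1,2}_{\rm loc}(\mathbb R^{D-1}\times S^1)$ is the content of the preceding lemma. Vanishing at spatial infinity follows because Steiner rearrangement preserves the $\mathcal H^{D-1}$-measure of every slice-wise superlevel set: the slice measures $\mathcal H^{D-1}(\{\phi_n(\cdot,\tau)>\epsilon\})$ are uniformly bounded in $\tau$ by $\mathcal H^{D-1}(K)$ for some compact $K\subset\mathbb R^{D-1}$, which forces the rearranged superlevel sets---Euclidean balls in $x$---to have radius uniformly bounded in $\tau$.

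Next I would analyze the three functionals. Equimeasurability of $\phi_n(\cdot,\tau)$ and $\phi_n^\sigma(\cdot,\tau)$ on each slice, combined with $V(0)=0$ and Fubini, yields $\mathcal V[\phi_n^\sigma]=\mathcal V[\phi_n]$. The gradient and kinetic functionals are controlled by applying Corollary \ref{cor:steiner} separately with the integrands $f(\xi,\eta)=|\xi|^2$ and $f(\xi,\eta)=|\eta|^2$---each convex, vanishing at the origin, and radially symmetric in the $x$-component of the gradient---to obtain $\mathcal T[\phi_n^\sigma]\leq\mathcal T[\phi_n]$ and $\mathcal K[\phi_n^\sigma]\leq\mathcal K[\phi_n]$. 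Together these three relations preserve the constraints $\mathcal V<0$ and $\mathcal K<-\mathcal V$ of the reduced problem and imply $\mathcal R[\phi_n^\sigma]\leq\mathcal R[\phi_n]$, so $\{\phi_n^\sigma\}$ is a minimizing sequence with the claimed symmetry and monotonicity.

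The main obstacle will be the preservation of $\mathcal V$ under symmetrization, which is delicate because $V$ is only assumed Borel-measurable and lower semicontinuous and need not have a definite sign. I would handle this by splitting $V=V_+-V_-$, applying a layer-cake representation to each non-negative piece on each $\tau$-slice (whose slice integrals are invariant under equimeasurable rearrangement since $V_\pm(0)=0$), and invoking the admissibility bound \eqref{eq:condition_v} together with the finiteness of $\mathcal V[\phi_n]$ and $\mathcal T[\phi_n]$ to certify that both $\int V_\pm(\phi_n(\cdot,\tau))\odif[order=D-1]{x}$ are well-defined before recombining and integrating in $\tau$. Once this point is settled, the Steiner inequalities for $\mathcal T$ and $\mathcal K$ are routine applications of Corollary \ref{cor:steiner}.
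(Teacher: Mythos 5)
Your proposal is correct and follows essentially the same route as the paper: replace each $\phi_n$ by its Steiner rearrangement, apply Corollary \ref{cor:steiner} separately with $f=|\nabla_x\phi|^2$ and $f=|\partial_\tau\phi|^2$ to get $\mathcal T[\phi_n^\sigma]\leq\mathcal T[\phi_n]$ and $\mathcal K[\phi_n^\sigma]\leq\mathcal K[\phi_n]$, use equimeasurability of level sets (with $V(0)=0$) to get $\mathcal V[\phi_n^\sigma]=\mathcal V[\phi_n]$, and conclude $\mathcal R[\phi_n^\sigma]\leq\mathcal R[\phi_n]$. Your additional checks (that $\phi_n^\sigma$ still vanishes at spatial infinity and that the slice-wise integral of $V$ is well-defined via the split $V=V_+-V_-$ and the admissibility bound) are sound refinements of points the paper treats implicitly, and do not change the argument.
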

\begin{proof}
    As $\phi_n \in W^{1,2}_{\rm loc}(\mathbb R^{D-1}\times S^1)$ is a non-negative function vanishing at spatial infinity, we can apply Corollary \ref{cor:steiner}. By choosing the convex functions $f(\nabla\phi)=|\nabla_x\phi|^2$ and $f(\nabla\phi)=|\partial_\tau\phi|^2$, we obtain the following inequalities for the kinetic and gradient terms:
    \begin{align}
        \mathcal T[\phi_n^\sigma]&\leq\mathcal T[\phi_n],\\
        \mathcal K[\phi_n^\sigma]&\leq\mathcal K[\phi_n].
    \end{align}
    Furthermore, since Steiner symmetrization preserves the measure of level sets, the potential term remains unchanged: $\mathcal V[\phi_n^\sigma]=\mathcal V[\phi_n]$.
    
    Combining these results, it follows that the functional $\mathcal R$ does not increase under symmetrization: $\mathcal R[\phi_n^\sigma]\leq\mathcal R[\phi_n]$. Therefore, we can replace the original minimizing sequence $\{\phi_n\}$ with the sequence of its Steiner symmetrizations, $\{\phi_n^\sigma\}$. By construction, each function in this new sequence is spherically symmetric and monotonic with respect to $x$, which proves the lemma.
%\qed
\end{proof}

\begin{lemma}
    Let $\phi$ be a solution of the reduced problem.
    If $\phi$ satisfies
    \begin{align}
        \mathcal H^D(&\{(x,\tau)\in \{\phi>0^+\}^{(1)}:\nabla_x\phi(x,\tau)=0\}\nonumber\\
        &\cap\{(x,\tau)\in \{\phi>0^+\}^{(1)}:\phi(x,\tau)<M(\tau)\})=0,\label{eq:no-plateau}
    \end{align}
    then, $\phi$ is $\mathcal H^D$-a.e.~spherically symmetric and monotonic with respect to $x$.
\end{lemma}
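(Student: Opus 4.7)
The plan is to compare $\phi$ with its Steiner symmetrization $\phi^\sigma$ in the spatial variables, use the minimality of $\phi$ to force equality in the Pólya--Szegő inequalities of Corollary \ref{cor:steiner}, and then invoke Corollary \ref{cor:equality_cases} to identify $\phi$ with $\phi^\sigma$ up to an $x$-translation.

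By Lemma \ref{lem:non-negative-sol} and the connected-support lemma, I may assume without loss of generality that $\phi \geq 0$ $\mathcal{H}^D$-a.e. and that $\{\phi > 0^+\}^{(1)}$ is connected. Since Steiner rearrangement preserves the $\mathcal{H}^{D-1}$-measure of every superlevel set on each time slice, we have $\mathcal{V}[\phi^\sigma]=\mathcal{V}[\phi]$. Applying Corollary \ref{cor:steiner} first with $f(p,q)=|p|^2$ yields $\mathcal{T}[\phi^\sigma]\leq\mathcal{T}[\phi]$, and then with $f(p,q)=q^2$ yields $\mathcal{K}[\phi^\sigma]\leq\mathcal{K}[\phi]$, so $\phi^\sigma$ is admissible for the reduced problem. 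Inserting these into the definition of $\mathcal{R}$ makes the numerator weakly decrease and the denominator weakly increase; since $\mathcal{R}[\phi^\sigma]\geq\mathcal{R}[\phi]$ by minimality, both inequalities must in fact be equalities, yielding $\mathcal{T}[\phi^\sigma]=\mathcal{T}[\phi]$ and $\mathcal{K}[\phi^\sigma]=\mathcal{K}[\phi]$ separately, hence $\int|\nabla\phi^\sigma|^2\,d^Dz=\int|\nabla\phi|^2\,d^Dz$.

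I then apply Corollary \ref{cor:equality_cases} on $\Omega=\mathbb{R}^{D-1}\times(0,\beta)$ (with one point removed from $S^1$ as in the preceding discussion) to the strictly convex, radially symmetric choice $f(p,q)=|p|^2+q^2$. The assumed condition \eqref{eq:no-plateau} is exactly the required plateau-freeness hypothesis, and the connectedness of the projection $\pi_{S^1}(\{\phi>0^+\}^{(1)})$, a continuous image of a connected set, guarantees that a single $x$-translation relates $\phi$ to $\phi^\sigma$. Because $\phi^\sigma$ is by construction spherically symmetric in $x$ about the origin and non-increasing in $|x|$, the original $\phi$ has the same properties about the translated center, which is what the lemma asserts.

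The main subtlety is that neither $|\nabla_x\phi|^2$ nor $(\partial_\tau\phi)^2$ is strictly convex as a function on $\mathbb{R}^{D-1}\times\mathbb{R}$, so Corollary \ref{cor:equality_cases} cannot be applied to the gradient and kinetic terms individually; the essential step is therefore to extract both equalities $\mathcal{T}[\phi^\sigma]=\mathcal{T}[\phi]$ and $\mathcal{K}[\phi^\sigma]=\mathcal{K}[\phi]$ from the single scalar equality $\mathcal{R}[\phi^\sigma]=\mathcal{R}[\phi]$ and only then to recombine them into equality of the full Dirichlet energy, which does admit a strictly convex integrand amenable to the equality-case theorem.
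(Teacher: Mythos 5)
Your proposal is correct and follows essentially the same route as the paper: fix $\mathcal V$ under Steiner symmetrization, use minimality to saturate $\mathcal T[\phi^\sigma]\leq\mathcal T[\phi]$ and $\mathcal K[\phi^\sigma]\leq\mathcal K[\phi]$ separately, and then apply the equality-case result (Corollary \ref{cor:equality_cases}) to the strictly convex full Dirichlet integrand $f(\nabla\phi)=|\nabla\phi|^2$ rather than to the non-strictly-convex pieces individually. Your explicit remarks on the admissibility of $\phi^\sigma$ for the reduced problem and on the connectedness of the projection are just slightly more detailed statements of steps the paper treats implicitly.
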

\begin{proof}
    Since $\phi$ is a solution to the reduced problem, it is non-negative, its support $\{\phi>0^+\}^{(1)}$ is connected, and it satisfies $\mathcal R[\phi^\sigma]=\mathcal R[\phi]$.

    The equality $\mathcal R[\phi^\sigma]=\mathcal R[\phi]$, combined with the inequalities $\mathcal T[\phi^\sigma]\leq\mathcal T[\phi]$ and $\mathcal K[\phi^\sigma]\leq\mathcal K[\phi]$ and the fact that $\mathcal V[\phi^\sigma]=\mathcal V[\phi]$, implies that both inequalities must be saturated. Thus, we must have $\mathcal T[\phi^\sigma]=\mathcal T[\phi]$ and $\mathcal K[\phi^\sigma]=\mathcal K[\phi]$.
    
    Let $f(\nabla\phi)=|\nabla\phi|^2$, then $f$ is a non-negative strictly convex function, vanishing at zero. We have
    \begin{align}
        \int\odif[order=D]{z}f(\nabla\phi^\sigma)&=\int\odif[order=D]{z}|\nabla_x\phi^\sigma|^2+\int\odif[order=D]{z}|\partial_\tau\phi^\sigma|^2\nonumber\\
        &=\int\odif[order=D]{z}|\nabla_x\phi|^2+\int\odif[order=D]{z}|\partial_\tau\phi|^2\nonumber\\
        &=\int\odif[order=D]{z}f(\nabla\phi)<\infty.
    \end{align}

    The preceding analysis confirms that the function $\phi$, the set $\{\phi>0^+\}^{(1)}$, and the convex function $f$ satisfy all the hypotheses of Corollary \ref{cor:equality_cases}. This implies that $\phi$ is equivalent to its Steiner rearrangement $\phi^\sigma$ up to a translation in the $x$-plane. This establishes that $\phi$ is $\mathcal{H}^D$-a.e.~spherically symmetric and monotonic with respect to $x$.
%\qed
\end{proof}
\subsection{Existence of a solution to the reduced problem}
Finally, we establish the existence of a solution to the reduced problem. This step is crucial, as convergence of the minimizing sequence to the trivial configuration would render our preceding arguments a mere demonstration of the trivial configuration's spherical symmetry.

For the subsequent analysis, we introduce the following notation. Let $B(p,R)$ denote the $(D-1)$-dimensional open ball of radius $R>0$ centered at $p\in \mathbb R^{D-1}$. We define the integral domains $\Omega_R=B(0,R)\times S^1$ and its complement, $\Omega_R^c=B^c(0,R)\times S^1$. The volume of a $k$-dimensional unit ball is denoted by $\omega_k=\int_{|x|<1}\odif[order=k]{x}$. For a minimizing sequence $\{\phi_n\}$, the potential term is fixed at $\mathcal V[\phi_n] = \mathcal V_0 < 0$. Without loss of generality, we assume the sequence is bounded from above, {\it i.e.}, $\sup_n\mathcal R[\phi_n]=\mathcal R_0<\infty$.

\begin{lemma}\label{lem:uniform}
    Let $\{\phi_n\}$ be a minimizing sequence. Then, the gradient term $\mathcal T[\phi_n]$ and the kinetic term $\mathcal K[\phi_n]$ are uniformly bounded from above.
\end{lemma}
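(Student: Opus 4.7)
The plan is to read both bounds directly off the constraints that define a minimizing sequence, together with the scale-invariant normalization $\mathcal V[\phi_n]=\mathcal V_0$ and the hypothesis $\sup_n\mathcal R[\phi_n]=\mathcal R_0<\infty$. Nothing deeper than algebra is needed because the bound on $\mathcal K$ is already built into the admissible class of the reduced problem, and the bound on $\mathcal T$ follows by rearranging the definition of $\mathcal R$.

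First I would record the kinetic bound. By the very definition of the minimizing sequence, each $\phi_n$ satisfies $\mathcal K[\phi_n]<-\mathcal V[\phi_n]=-\mathcal V_0$. Since $\mathcal K[\phi_n]\ge 0$ always, this immediately gives the uniform two-sided bound $0\le \mathcal K[\phi_n]<-\mathcal V_0$.

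Second, I would extract the gradient bound. Rearranging the definition of $\mathcal R$ yields
\begin{equation}
    (\mathcal T[\phi_n])^{\frac{D-1}{D-3}}=\mathcal R[\phi_n]\bigl(-\mathcal V_0-\mathcal K[\phi_n]\bigr)\le \mathcal R_0\bigl(-\mathcal V_0-\mathcal K[\phi_n]\bigr)\le \mathcal R_0(-\mathcal V_0),
\end{equation}
where the last inequality uses $\mathcal K[\phi_n]\ge 0$. Since the exponent $(D-1)/(D-3)$ is strictly positive for $D>3$, this gives the desired uniform upper bound $\mathcal T[\phi_n]\le \bigl(\mathcal R_0(-\mathcal V_0)\bigr)^{(D-3)/(D-1)}$.

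There is no substantive obstacle in this lemma: both estimates are essentially tautological consequences of having fixed $\mathcal V[\phi_n]=\mathcal V_0$, of the admissibility condition $\mathcal K<-\mathcal V$, and of $\mathcal R[\phi_n]$ being bounded. The genuine analytic difficulty, which I expect will arise in the subsequent lemmas, is ensuring that the minimizing sequence does not escape to spatial infinity or collapse onto the trivial configuration once these uniform bounds are in hand; the present statement merely sets up the compactness framework for those arguments.
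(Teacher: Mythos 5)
Your proposal is correct and follows essentially the same route as the paper: the kinetic bound is read off the constraint $\mathcal K[\phi_n]<-\mathcal V[\phi_n]=-\mathcal V_0$ built into the reduced problem, and the gradient bound comes from rearranging $\mathcal R$ using $\mathcal K[\phi_n]\ge 0$ and $\sup_n\mathcal R[\phi_n]=\mathcal R_0<\infty$, yielding the identical estimate $\mathcal T[\phi_n]\le\bigl(-\mathcal R_0\mathcal V_0\bigr)^{(D-3)/(D-1)}$.
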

\begin{proof}
    By definition of the reduced problem, the potential term is fixed at $\mathcal V[\phi_n] = \mathcal V_0 < 0$, and the kinetic term is constrained by $\mathcal K[\phi_n] < -\mathcal V_0$. Therefore, it is uniformly bounded from above.

    The uniform boundedness of the gradient term $\mathcal T[\phi_n]$ follows from the assumption that $\mathcal R[\phi_n]$ is bounded from above. From the definition of $\mathcal R$ in Eq.~\eqref{eq:def_R}, we have
    \begin{equation}
        \mathcal R[\phi_n] = \frac{(\mathcal T[\phi_n])^{\frac{D-1}{D-3}}}{-\mathcal V_0 - \mathcal K[\phi_n]} \ge \frac{(\mathcal T[\phi_n])^{\frac{D-1}{D-3}}}{-\mathcal V_0}.
    \end{equation}
    As $\mathcal R[\phi_n]$ is bounded from above by $\mathcal R_0$, it follows that $\mathcal T[\phi_n]$ is also uniformly bounded: $\sup_n \mathcal T[\phi_n] \le (-\mathcal R_0 \mathcal V_0)^{\frac{D-3}{D-1}}$.
    %\qed
\end{proof}
\begin{lemma}\label{lem:bound}
    Let $\phi \in W^{1,2}_{\rm loc}(\mathbb R^{D-1}\times S^1)$ be a function that vanishes at spatial infinity. If the gradient term $\mathcal T[\phi]$ is finite, then $\int\odif[order=D]{z}\ab|\frac{\phi(z)}{|x|}|^2$ and $\int_{\Omega_R}\odif[order=D]{z}|\phi(z)|^2$ are both bounded by $\mathcal T[\phi]$.
\end{lemma}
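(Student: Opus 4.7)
The plan is to reduce the claim to the classical scale-invariant Hardy inequality on $\mathbb R^{D-1}$, applied slicewise in $\tau$ and then integrated over $S^1$. The hypothesis $D>3$ ensures the ambient spatial dimension is $n=D-1\geq 3$, which is precisely the regime admitting the Hardy inequality
$$\int_{\mathbb R^{D-1}}\frac{|u(x)|^2}{|x|^2}\,\odif[order=D-1]{x}\leq \ab(\frac{2}{D-3})^2\int_{\mathbb R^{D-1}}|\nabla u|^2\,\odif[order=D-1]{x}$$
for every $u\in W^{1,2}(\mathbb R^{D-1})$. Once the slicewise estimate is secured, Fubini yields the first bound by a dimension-dependent constant times $\mathcal T[\phi]$, and the second bound on $\Omega_R$ follows immediately from $|x|\leq R$ on $B(0,R)$.

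The step requiring care, and which I expect to be the main technical obstacle, is that $\phi(\cdot,\tau)$ lies only in $W^{1,2}_{\rm loc}(\mathbb R^{D-1})$ rather than in $W^{1,2}(\mathbb R^{D-1})$, so Hardy cannot be applied directly on each slice. I would bridge this gap by a truncation argument in the spirit of Corollary \ref{cor:steiner}: for every $\epsilon>0$, set $\phi_\epsilon(z)=\mathrm{sgn}(\phi(z))\max(|\phi(z)|-\epsilon,0)$. The vanishing-at-infinity condition \eqref{eq:vanish_at_infinity} supplies a compact $K_\epsilon\subset\mathbb R^{D-1}$ outside of which $|\phi|<\epsilon$, so $\phi_\epsilon$ is supported in $K_\epsilon\times S^1$. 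Combined with $\phi_\epsilon\in W^{1,2}_{\rm loc}$ and $|\nabla_x\phi_\epsilon|\leq|\nabla_x\phi|$ $\mathcal H^D$-a.e., this places $\phi_\epsilon$ in $W^{1,2}(\mathbb R^{D-1}\times S^1)$ and, via Fubini, gives $\phi_\epsilon(\cdot,\tau)\in W^{1,2}(\mathbb R^{D-1})$ for $\mathcal H^1$-a.e. $\tau\in S^1$.

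With this in hand, applying the slicewise Hardy inequality to $\phi_\epsilon(\cdot,\tau)$ and integrating in $\tau$ gives
$$\int\ab|\frac{\phi_\epsilon(z)}{|x|}|^2\odif[order=D]{z}\leq \ab(\frac{2}{D-3})^2\int|\nabla_x\phi_\epsilon|^2\odif[order=D]{z}\leq\frac{8}{(D-3)^2}\,\mathcal T[\phi].$$
Because $|\phi_\epsilon|\nearrow|\phi|$ pointwise as $\epsilon\to 0^+$, the monotone convergence theorem lifts the bound to $\phi$ itself, establishing the first estimate. For the second estimate, on $\Omega_R=B(0,R)\times S^1$ we have $|\phi(z)|^2=|x|^2\,|\phi(z)/|x||^2\leq R^2\,|\phi(z)/|x||^2$, so integrating over $\Omega_R$ and invoking the first estimate yields a bound of the form $R^2\cdot \tfrac{8}{(D-3)^2}\,\mathcal T[\phi]$. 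Apart from the slicewise promotion of $\phi$ to a globally Sobolev function, the argument is entirely classical.
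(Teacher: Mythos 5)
Your proposal is correct and follows essentially the same route as the paper: the paper's proof also truncates via $\phi_\epsilon=\max(\phi-\epsilon,0)$ to obtain compact spatial support, applies the standard Hardy inequality (with the same constant $2\ab(\tfrac{2}{D-3})^2$ on $\mathcal T[\phi]$), recovers the bound for $\phi$ by monotone convergence, and deduces the $\Omega_R$ estimate from $|x|\leq R$. Your explicit slicewise/Fubini formulation and the signed truncation are just more detailed renderings of the same argument.
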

\begin{proof}
    The first bound is a direct consequence of Hardy's inequality\footnote{This extension of Hardy's inequality to functions that vanish at spatial infinity can be justified by the following argument. For any $\epsilon > 0$, consider the truncated function $\phi_\epsilon = \max(\phi - \epsilon, 0)$. Since $\phi$ vanishes at spatial infinity, $\phi_\epsilon$ has compact support in the $x$ directions and belongs to $W^{1,2}(\mathbb{R}^{D-1} \times S^1)$. The standard Hardy inequality can be applied to $\phi_\epsilon$. The desired inequality for $\phi$ is then recovered by taking the limit $\epsilon \to 0$ and applying the monotone convergence theorem.}:
    \begin{align}
        \int\odif[order=D]{z}\ab|\frac{\phi(z)}{|x|}|^2\leq\ab(\frac{2}{D-3})^2\int\odif[order=D]{z}\ab|\nabla_x\phi(z)|^2=2\ab(\frac{2}{D-3})^2\mathcal T[\phi].
    \end{align}
    The second bound for the local $L^2$ norm on $\Omega_R$ follows from the first:
    \begin{align}
        \int_{\Omega_R}\odif[order=D]{z}|\phi(z)|^2&\leq R^2\int_{\Omega_R}\odif[order=D]{z}\ab|\frac{\phi(z)}{|x|}|^2\leq 2R^2\ab(\frac{2}{D-3})^2\mathcal T[\phi].\label{eq:l2}
    \end{align}
    %\qed
\end{proof}

A direct consequence of Lemma \ref{lem:uniform} and Lemma \ref{lem:bound} is that the minimizing sequence $\{\phi_n\}$ is uniformly bounded in the Sobolev space $W^{1,2}(\Omega_R)$ for any finite $R>0$.

\begin{lemma}\label{lem:Rstar}
    Let $\{\phi_n\}$ be a minimization sequence and $V$ be an admissible potential. Then, there exists $R_*>0$ such that $\int_{\Omega_{R_*}^c}\odif[order=D]{z}V(\phi_n(z))$ is negative for all $n$. 
\end{lemma}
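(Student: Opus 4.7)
The plan is to pick $R_*$ sufficiently \emph{small}, exploiting the identity $\int_{\Omega_{R_*}^c}V(\phi_n)\odif[order=D]{z}=\mathcal V_0-\int_{\Omega_{R_*}}V(\phi_n)\odif[order=D]{z}$. Since $\mathcal V_0<0$ is fixed along the minimizing sequence, establishing the claim reduces to the quantitative uniform lower bound $\int_{\Omega_{R_*}}V(\phi_n)\odif[order=D]{z}>\mathcal V_0$ for all $n$. The admissibility condition gives $V(\phi)\geq a|\phi|^A-b|\phi|^B\geq-b|\phi|^B$, which converts this into a uniform upper bound on $b\int_{\Omega_{R_*}}|\phi_n|^B\odif[order=D]{z}$; so the whole argument hinges on showing that this last integral tends to zero as $R_*\to 0$, uniformly in $n$.

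To produce that vanishing estimate, I would apply H\"older's inequality with the Sobolev exponent $p^*=2D/(D-2)$:
\begin{equation*}
    \int_{\Omega_{R_*}}|\phi_n|^B\odif[order=D]{z}\leq|\Omega_{R_*}|^{1-B/p^*}\|\phi_n\|_{L^{p^*}(\Omega_{R_*})}^B.
\end{equation*}
The measure factor equals $(\omega_{D-1}\beta)^{1-B/p^*}R_*^{(D-1)(1-B/p^*)}$, and the exponent of $R_*$ is strictly positive precisely because admissibility forces $B<p^*$. For the $L^{p^*}$ factor I would invoke the standard Sobolev embedding $W^{1,2}(\Omega_1)\hookrightarrow L^{p^*}(\Omega_1)$ on the bounded Lipschitz piece $\Omega_1=B(0,1)\times S^1$ of the product manifold. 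Lemma \ref{lem:bound} with $R=1$ together with Lemma \ref{lem:uniform} bound $\|\phi_n\|_{L^2(\Omega_1)}$ and $\|\nabla\phi_n\|_{L^2(\Omega_1)}$ uniformly in $n$, so $\|\phi_n\|_{L^{p^*}(\Omega_1)}\leq C$ uniformly; for $R_*\leq 1$, monotonicity of the integration domain then gives $\|\phi_n\|_{L^{p^*}(\Omega_{R_*})}\leq C$ as well.

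Combining these ingredients yields $\int_{\Omega_{R_*}}|\phi_n|^B\odif[order=D]{z}\leq C'R_*^{(D-1)(1-B/p^*)}$ for every $R_*\in(0,1]$ and every $n$. Choosing $R_*$ small enough that $bC'R_*^{(D-1)(1-B/p^*)}<|\mathcal V_0|$, the chain
\begin{equation*}
    \int_{\Omega_{R_*}^c}V(\phi_n)\odif[order=D]{z}=\mathcal V_0-\int_{\Omega_{R_*}}V(\phi_n)\odif[order=D]{z}\leq\mathcal V_0+b\int_{\Omega_{R_*}}|\phi_n|^B\odif[order=D]{z}<0
\end{equation*}
holds uniformly in $n$, as desired. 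The main obstacle I anticipate is the uniform Sobolev bound: while $W^{1,2}\hookrightarrow L^{p^*}$ is classical on bounded Lipschitz Euclidean domains, one must verify that the embedding transfers cleanly to $\Omega_1\subset\mathbb R^{D-1}\times S^1$ with a constant independent of $n$, and that functions in $W^{1,2}_{\rm loc}$ genuinely restrict to $W^{1,2}(\Omega_1)$. A finite covering by Euclidean charts handles the former, and the latter follows because $\bar\Omega_1$ is compact in the ambient manifold, after which the strict inequality $B<p^*$ baked into admissibility does the remaining quantitative work.
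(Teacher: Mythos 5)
Your proposal is correct and follows essentially the same route as the paper: bound the exterior integral by $\mathcal V_0+b\int_{\Omega_{R_*}}|\phi_n|^B$, control the interior term via H\"older with the critical exponent $p^*=2D/(D-2)$ and the Sobolev embedding together with the uniform $W^{1,2}$ bounds of Lemmas \ref{lem:uniform} and \ref{lem:bound}, and use $B<p^*$ to make it vanish as $R_*\to0$ uniformly in $n$. Your refinement of applying the embedding on the fixed domain $\Omega_1$ and shrinking only the measure factor is a tidy way to keep the Sobolev constant manifestly independent of $R_*$, but it does not change the argument in substance.
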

\begin{proof}
    From the admissibility condition on the potential, Eq.~\eqref{eq:condition_v}, we have $V(\phi) + b|\phi|^B \ge a|\phi|^A \ge 0$ for all $\phi \ge 0$.
    This bounds the integral of $V(\phi_n)$ over the exterior domain $\Omega_{R_*}^c$ as follows:
    \begin{align}
        \int_{\Omega_{R_*}^c}\odif[order=D]{z}V(\phi_n(z))&\leq \int_{\Omega_{R_*}^c}\odif[order=D]{z}V(\phi_n(z)) \nonumber\\
        &\hspace{3ex}+ \int_{\Omega_{R_*}}\odif[order=D]{z}\ab[V(\phi_n(z))+b|\phi_n(z)|^B]\nonumber\\
        &=\mathcal V[\phi_n]+b\int_{\Omega_{R_*}}\odif[order=D]{z}|\phi_n(z)|^B.\label{eq:negative_int_v}
    \end{align}
    By definition of the reduced problem, $\mathcal V[\phi_n] = \mathcal V_0 < 0$. To show that the right-hand side is negative for a sufficiently small $R_*$, we need to demonstrate that the second term can be made arbitrarily small, uniformly in $n$, as $R_* \to 0$.

    Let $p_* = 2D/(D-2)$ be the critical Sobolev exponent. Since $B < p_*$, we can apply H\"older's inequality to the integral over $\Omega_{R_*}$:
    \begin{align}
        \int_{\Omega_{R_*}}\odif[order=D]{z}|\phi_n(z)|^B &\leq \ab(\int_{\Omega_{R_*}}\odif[order=D]{z})^{1-\frac{B}{p_*}} \ab(\int_{\Omega_{R_*}}\odif[order=D]{z}|\phi_n(z)|^{p_*})^{\frac{B}{p_*}} \nonumber \\
        &= \ab(\beta \omega_{D-1} R_*^{D-1})^{1-\frac{B}{p_*}} ||\phi_n||_{L^{p_*}(\Omega_{R_*})}^B. \label{eq:holder}
    \end{align}
    The Sobolev embedding theorem for $W^{1,2}(\Omega_{R_*})$ states that there exists a constant $C_S$ such that $||\phi_n||_{L^{p_*}(\Omega_{R_*})} \leq C_S ||\phi_n||_{W^{1,2}(\Omega_{R_*})}$.
    From Lemmas \ref{lem:uniform} and \ref{lem:bound}, the sequence $\{\phi_n\}$ is uniformly bounded in $W^{1,2}(\Omega_{R_*})$, {\it i.e.}, $\sup_n ||\phi_n||_{W^{1,2}(\Omega_{R_*})} \le M$ for some constant $M$.
    Therefore, the $L^{p_*}$ norm is also uniformly bounded: $\sup_n ||\phi_n||_{L^{p_*}(\Omega_{R_*})} \le C_S M$.

    Substituting this back into Eq.~\eqref{eq:holder}, we find
    \begin{equation}
        \int_{\Omega_{R_*}}\odif[order=D]{z}|\phi_n(z)|^B \leq \ab(\beta \omega_{D-1} R_*^{D-1})^{1-\frac{B}{p_*}} (C_S M)^B.
    \end{equation}
    Since $1-B/p_* > 0$, the right-hand side vanishes as $R_* \to 0$, uniformly for all $n$. We can therefore choose $R_* > 0$ small enough such that $b \int_{\Omega_{R_*}}\odif[order=D]{z}|\phi_n(z)|^B < -\mathcal V_0$ for all $n$.
    With this choice of $R_*$, Eq.~\eqref{eq:negative_int_v} becomes negative for all $n$, which completes the proof.
    %\qed
\end{proof}
\begin{lemma}\label{lem:conv}
    There exists a minimizing sequence of spherically symmetric monotone functions $\{\phi_n\}$ and a spherically symmetric monotone function $\phi$ such that $\phi_n$ converges to $\phi$ in $L^q_{\rm loc}(\mathbb R^{D-1}\times S^1)$ for all $0< q<2D/(D-2)$ simultaneously, and $\partial_\tau\phi_n$ and $\nabla_x\phi_n$ weakly converge to $\partial_\tau\phi$ and $\nabla_x\phi$ in $L^2(\mathbb{R}^{D-1} \times S^1)$, respectively.
\end{lemma}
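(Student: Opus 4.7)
The plan is to extract a limit function via the standard weak/strong compactness package, using the uniform bounds already in hand and a diagonal argument across the exhausting family $\Omega_{R_k}$. I would start from the Steiner-symmetrized minimizing sequence guaranteed by Lemma \ref{lem:symmetry}, so from the outset every $\phi_n$ is non-negative, spherically symmetric, and monotone non-increasing in $|x|$. By Lemma \ref{lem:uniform}, the sequences $\{\nabla_x\phi_n\}$ and $\{\partial_\tau\phi_n\}$ are uniformly bounded in $L^2(\mathbb{R}^{D-1}\times S^1)$; reflexivity and the Banach--Alaoglu theorem therefore yield, along a subsequence, weak limits $g\in L^2(\mathbb{R}^{D-1}\times S^1;\mathbb{R}^{D-1})$ and $h\in L^2(\mathbb{R}^{D-1}\times S^1)$.

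Next, for each fixed $R>0$, Lemma \ref{lem:bound} gives a uniform $L^2(\Omega_R)$ bound on $\phi_n$, so $\{\phi_n\}$ is uniformly bounded in $W^{1,2}(\Omega_R)$. The Rellich--Kondrachov embedding $W^{1,2}(\Omega_R)\hookrightarrow L^q(\Omega_R)$ is compact for every $q<2D/(D-2)$, hence, passing to a subsequence, $\phi_n$ converges strongly in $L^q(\Omega_R)$ for any such $q$. A standard diagonal extraction over $R_k=k$ produces a single subsequence (still denoted $\{\phi_n\}$) and a measurable function $\phi$ such that $\phi_n\to\phi$ in $L^{q_0}_{\rm loc}(\mathbb{R}^{D-1}\times S^1)$ for some $q_0$ arbitrarily close to $2D/(D-2)$. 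Combining this strong convergence with the uniform $L^{2D/(D-2)}_{\rm loc}$ bound provided by the Sobolev embedding and an interpolation inequality upgrades the convergence to all $q$ in $0<q<2D/(D-2)$ simultaneously.

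It remains to identify the weak $L^2$ limits with the weak derivatives of $\phi$. Since $\phi_n\to\phi$ in $L^1_{\rm loc}$, we have $\phi_n\to\phi$ in $\mathcal{D}'(\mathbb{R}^{D-1}\times S^1)$; weak $L^2$ convergence also implies distributional convergence of $\nabla_x\phi_n$ and $\partial_\tau\phi_n$. Testing against $\varphi\in C_c^\infty$ and passing to the limit in the integration-by-parts identity identifies $g=\nabla_x\phi$ and $h=\partial_\tau\phi$ as weak derivatives, so in particular $\phi\in W^{1,2}_{\rm loc}(\mathbb{R}^{D-1}\times S^1)$. Spherical symmetry and spatial monotonicity pass to the pointwise a.e.\ limit (obtained along a further subsequence) because both are closed conditions under pointwise convergence.

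The main obstacle I anticipate is not any single step but the bookkeeping needed to align three different modes of convergence (strong $L^q_{\rm loc}$ for a continuum of exponents, weak $L^2$ globally for the derivatives, and a.e.\ pointwise convergence for preservation of symmetry) along one common subsequence. The diagonal argument addresses the continuum of $q$'s via a countable dense choice, and the interpolation step lets one avoid extracting separately for each $q$. A subtle point is that weak $L^2$ convergence of the derivatives lives on the whole cylinder while the strong convergence of $\phi_n$ is only local; uniqueness of weak limits on every $\Omega_R$ is what bridges them and guarantees that the globally extracted $g$ and $h$ are the genuine weak derivatives of $\phi$.
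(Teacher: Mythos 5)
Your proposal is correct and follows essentially the same route as the paper: uniform $W^{1,2}(\Omega_R)$ bounds from Lemmas \ref{lem:uniform} and \ref{lem:bound}, Rellich--Kondrachov plus a diagonal extraction over the exhausting domains for the strong $L^q_{\rm loc}$ convergence, Banach--Alaoglu for the weak $L^2$ convergence of the derivatives, and preservation of spherical symmetry and monotonicity under the a.e.\ limit. The only differences are cosmetic: the paper covers the full range $0<q<2D/(D-2)$ by diagonalizing over $q$ and using H\"older on the finite-measure sets $\Omega_R$ (which you would still need for exponents below your $q_0$) rather than interpolating against the uniform critical Sobolev bound, and it leaves the identification of the weak limits with $\nabla_x\phi$ and $\partial_\tau\phi$ implicit, which you spell out via distributional convergence.
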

\begin{proof}
    Rellich-Kondrachov's theorem states that $W^{1,2}(\Omega_R)\Subset L^q(\Omega_R)$ for $1\leq q<2D/(D-2)$, {\it i.e.} there exists a subsequence $\{\phi_{n_k}(|x|,\tau)\}$ and a function $\phi(x,\tau)\in L^q(\Omega_R)$ such that $\phi_{n_k}(|x|,\tau)\to\phi(x,\tau)$ as $k\to\infty$ in $L^q(\Omega_R)$. 
    Furthermore, since $\Omega_R$ has finite measure, an application of H\"older's inequality shows that this convergence in $L^q(\Omega_R)$ implies convergence in $L^h(\Omega_R)$ for all $0<h<q$:
    \begin{align}
        ||\phi-\phi_{n_k}||_{L^h(\Omega_R)}^h&=||(\phi-\phi_{n_k})^h||_{L^1(\Omega_R)}\nonumber\\
        &\leq||(\phi-\phi_{n_k})^h||_{L^{\frac{q}{h}}(\Omega_R)}||1||_{L^{\frac{q}{q-h}}(\Omega_R)}\nonumber\\
        &=||\phi-\phi_{n_k}||_{L^q(\Omega_R)}^h||1||_{L^{\frac{q}{q-h}}(\Omega_R)}.
    \end{align}

    By a standard diagonal argument over increasing domains $\Omega_R$ and exponents $q$, we can extract a subsequence that converges to a limit function $\phi$ in $L^q_{\rm loc}(\mathbb{R}^{D-1} \times S^1)$ for all $0 < q < 2D/(D-2)$ simultaneously. The limit function $\phi$ must also be $\mathcal H^D$-a.e.~spherically symmetric and monotonic. This is because the sequence $\{\phi_n\}$ consists of such functions, and these properties are preserved under the almost everywhere convergence implied by the $L^q_{\rm loc}$ limit.

   Regarding the derivatives, Lemma \ref{lem:uniform} establishes that the sequences $\{\nabla_x \phi_{n_k}\}$ and $\{\partial_\tau \phi_{n_k}\}$ are uniformly bounded in $L^2(\mathbb{R}^{D-1} \times S^1)$. Since $L^2$ is a reflexive Banach space, the Banach-Alaoglu theorem and the Eberlein-\v Smulian theorem guarantee the existence of weakly convergent subsequences. Therefore, we can find a subsequence such that $\nabla_x \phi_{n_{k'}}$ and $\partial_\tau \phi_{n_{k'}}$ weakly converge to $\nabla_x \phi$ and $\partial_\tau \phi$ in $L^2(\mathbb{R}^{D-1} \times S^1)$.
   %\qed
\end{proof}

\begin{lemma}\label{lem:limit_finite}
    Let $\phi$ be the limit function of Lemma \ref{lem:conv}. Then, $\mathcal T[\phi]$ and $\mathcal K[\phi]$ are both finite.
\end{lemma}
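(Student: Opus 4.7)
The plan is to deduce the finiteness of $\mathcal T[\phi]$ and $\mathcal K[\phi]$ directly from the weak $L^2$ convergence stated in Lemma \ref{lem:conv} combined with the uniform bounds already obtained in Lemma \ref{lem:uniform}. The key fact is that the $L^2$ norm is weakly lower semicontinuous on the Hilbert space $L^2(\mathbb R^{D-1}\times S^1)$: if $f_n \rightharpoonup f$ weakly in $L^2$, then $\|f\|_{L^2}\le\liminf_n\|f_n\|_{L^2}$. This is a standard consequence of the Cauchy–Schwarz inequality applied to the identity $\|f\|_{L^2}^2 = \lim_n \langle f,f_n\rangle$, or equivalently of the Hahn–Banach theorem.

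Concretely, I would argue as follows. By Lemma \ref{lem:conv}, $\nabla_x\phi_n\rightharpoonup\nabla_x\phi$ and $\partial_\tau\phi_n\rightharpoonup\partial_\tau\phi$ weakly in $L^2(\mathbb R^{D-1}\times S^1)$ (after passing to the subsequence constructed there). Applying weak lower semicontinuity componentwise,
\begin{equation}
    \mathcal T[\phi] = \tfrac{1}{2}\|\nabla_x\phi\|_{L^2}^2 \le \tfrac{1}{2}\liminf_{n\to\infty}\|\nabla_x\phi_n\|_{L^2}^2 = \liminf_{n\to\infty}\mathcal T[\phi_n],
\end{equation}
and analogously $\mathcal K[\phi]\le\liminf_{n\to\infty}\mathcal K[\phi_n]$. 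By Lemma \ref{lem:uniform}, the right-hand sides are bounded by the finite constants $(-\mathcal R_0\mathcal V_0)^{\frac{D-3}{D-1}}$ and $-\mathcal V_0$, respectively, so $\mathcal T[\phi]<\infty$ and $\mathcal K[\phi]<\infty$.

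There is no real obstacle here since the heavy lifting (extracting a weakly convergent subsequence with the correct limit) has already been absorbed into Lemma \ref{lem:conv}; the present lemma is essentially a one-line corollary of weak lower semicontinuity. The only point worth flagging is that one must use the specific subsequence from Lemma \ref{lem:conv} for which the weak convergence of the gradients holds, rather than the original minimizing sequence, so that the inequality actually applies. With that caveat, the result follows immediately.
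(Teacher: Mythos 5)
Your proof matches the paper's argument essentially verbatim: both invoke weak lower semicontinuity of the $L^2$ norm applied to the weakly convergent derivative sequences from Lemma \ref{lem:conv}, then conclude via the uniform bounds of Lemma \ref{lem:uniform}. The proposal is correct and takes the same route as the paper.
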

\begin{proof}
    Let $\{\phi_n\}$ be the minimizing sequence from Lemma \ref{lem:conv}, whose derivatives converge weakly in $L^2(\mathbb{R}^{D-1} \times S^1)$. A standard result in functional analysis is that the squared norm in a Hilbert space is weakly lower semicontinuous. Applying this property to the $L^2$ norms of the derivatives yields:
    \begin{equation}
        \int\odif[order=D]{z}\ab|\nabla_x\phi(z)|^2\leq\liminf_{n\to\infty}\int\odif[order=D]{z}\ab|\nabla_x\phi_n(z)|^2,\label{eq:liminf_t}
    \end{equation}
    and
    \begin{equation}
        \int\odif[order=D]{z}\ab|\partial_\tau\phi(z)|^2\leq\liminf_{n\to\infty}\int\odif[order=D]{z}\ab|\partial_\tau\phi_n(z)|^2.\label{eq:liminf_k}
    \end{equation}
    From Lemma \ref{lem:uniform}, this implies that $\mathcal T[\phi]$ and $\mathcal K[\phi]$ are finite.
    %\qed
\end{proof}

\begin{lemma}
    Let $\{\phi_n\}$ and $\phi$ be the minimizing sequence and the limit function of Lemma \ref{lem:conv}. Then, $\phi$ is the solution of the reduced problem, {\it i.e.} the following conditions hold:
    \begin{align}
        \esssup_{(x,\tau)\in\Omega^c_R}|x|^{\frac{D-3}{2}}\phi(|x|,\tau)&<\infty,~({\rm vanishing~at~spatial~infinity})\\
        \mathcal V[\phi]+\mathcal K[\phi]&<0,~({\rm nontriviality})
    \end{align}
    and
    \begin{equation}
        \mathcal R[\phi]\leq\liminf_{n\to\infty}\mathcal R[\phi_n].~({\rm lower~semicontinuity})\label{eq:lower_semicont}
    \end{equation}
\end{lemma}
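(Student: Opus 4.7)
My approach is to verify the three bullet points in sequence, using the $L^q_{\rm loc}$ strong convergence and the weak $L^2$ convergence of the derivatives from Lemma~\ref{lem:conv}, together with the uniform bounds of Lemmas~\ref{lem:uniform}, \ref{lem:bound}, and the tail estimate of Lemma~\ref{lem:Rstar}.

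For vanishing at spatial infinity, I would exploit the radial monotonicity inherited by $\phi$: for a.e.\ $\tau$, monotonicity yields the Strauss-type bound $\omega_{D-1}R^{D-1}\phi(R,\tau)^2 \leq \int_{B(0,R)}\phi^2\,\odif[order=D-1]{x}$, and Lemma~\ref{lem:bound} then gives $\int_0^\beta \phi(R,\tau)^2\,\odif{\tau} \leq C R^{3-D}$. Promoting this to an essential supremum requires the one-dimensional embedding $W^{1,2}(S^1)\hookrightarrow L^\infty(S^1)$ on a.e.\ spherical slice, with the $\tau$-derivative controlled by $\mathcal K[\phi]<\infty$ from Lemma~\ref{lem:limit_finite}.

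Next, to secure $\mathcal V[\phi]\leq \mathcal V_0$ — which both drives non-triviality and feeds into lower semicontinuity — I would split at $R\geq R_*$ from Lemma~\ref{lem:Rstar}. On $\Omega_R$, the $L^q_{\rm loc}$ strong convergence for admissible exponents combined with the lower semicontinuity of $V$ and the admissibility bound $a|\phi|^A \leq V(\phi)+b|\phi|^B$ supports a Fatou-type argument on the non-negative integrand $V(\phi_n)+b|\phi_n|^B$, yielding $\int_{\Omega_R}V(\phi)\,\odif[order=D]{z} \leq \liminf\int_{\Omega_R}V(\phi_n)\,\odif[order=D]{z}$. On $\Omega_R^c$, Lemma~\ref{lem:Rstar} supplies the uniform upper bound $\int_{\Omega_R^c}V(\phi_n)\,\odif[order=D]{z}\leq 0$, and the matching bound for the limit follows by another Fatou application to the same non-negative integrand. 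Summing, $\mathcal V[\phi]\leq \mathcal V_0 < 0$. With $\mathcal K[\phi]\leq \liminf\mathcal K[\phi_n]\leq -\mathcal V_0$ by weak LSC and the reduced-problem constraint, we get $\mathcal V[\phi]+\mathcal K[\phi]\leq 0$; strict inequality is enforced by noting that equality would force $\mathcal T[\phi_n]\to 0$, which via Lemma~\ref{lem:bound} drives $\phi_n\to 0$ in $L^B_{\rm loc}$ and contradicts the uniform lower bound $\|\phi_n\|_{L^B}^B\geq |\mathcal V_0|/b$ derived from admissibility.

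Finally, with the numerator bound $\mathcal T[\phi]^{(D-1)/(D-3)}\leq \liminf\mathcal T[\phi_n]^{(D-1)/(D-3)}$ from weak LSC and the denominator bound $D_\phi := -\mathcal V[\phi]-\mathcal K[\phi] \geq \limsup(-\mathcal V_0-\mathcal K[\phi_n])>0$ just established, the elementary inequality $\liminf(a_n b_n)\geq \liminf a_n\cdot \liminf b_n$ for non-negative sequences, applied to $a_n = \mathcal T[\phi_n]^{(D-1)/(D-3)}$ and $b_n = 1/(-\mathcal V_0-\mathcal K[\phi_n])$, gives $\mathcal R[\phi]\leq \liminf\mathcal R[\phi_n]$. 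I expect the main obstacle to be the convergence of $\mathcal V$ under the merely lower-semicontinuous $V$ on an unbounded domain: the interior $L^q_{\rm loc}$ convergence, the admissibility growth control that bounds $V(\phi_n)$ from below, and the uniform exterior sign estimate of Lemma~\ref{lem:Rstar} must be carefully coordinated through Fatou's lemma. A secondary delicate point is upgrading the $L^2(S^1)$ Strauss-type decay to the essential-supremum bound in the vanishing-at-infinity claim, which needs the periodic $\tau$-smoothness rather than the radial argument alone.
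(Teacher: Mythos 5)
Your skeleton (Hardy/monotonicity decay, interior--exterior splitting at $R_*$, Fatou on the shifted integrand $V(\phi_n)+b|\phi_n|^B$, weak lower semicontinuity for $\mathcal T$ and $\mathcal K$, then the quotient inequality for $\mathcal R$) matches the paper's, but the central step is not closed. Fatou plus the sign estimate of Lemma~\ref{lem:Rstar} does not yield $\mathcal V[\phi]\le\mathcal V_0$. On the interior you get $\int_{\Omega_R}V(\phi)\le\liminf_n\int_{\Omega_R}V(\phi_n)=\mathcal V_0-\limsup_n\int_{\Omega_R^c}V(\phi_n)$, and since $V(\phi_n)\ge -b|\phi_n|^B$ the correction term is only controlled by $b\sup_n\int_{\Omega_R^c}|\phi_n|^B$; on the exterior, Fatou gives $\int_{\Omega_R^c}V(\phi)\le\liminf_n\int_{\Omega_R^c}\bigl[V(\phi_n)+b|\phi_n|^B\bigr]\le b\sup_n\int_{\Omega_R^c}|\phi_n|^B$, not $\le 0$. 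So summing leaves leak terms of size $\sup_n\int_{\Omega_R^c}|\phi_n|^B$, and nothing in your proposal makes these vanish as $R\to\infty$ \emph{uniformly in $n$}. This uniform tail estimate, $\sup_n\int_{\Omega_R^c}|\phi_n|^B\le \mathcal C\,R^{-\frac{D-3}{2}(B-A)}$, is the crux of the paper's Step~2: it is built from the uniform spatial decay $|x|^{\frac{D-3}{2}}\phi_n\lesssim 1$ (the Step-1 bound applied uniformly along the sequence), a uniform bound on $\int_{\Omega_{R_*}^c}|\phi_n|^C$ with $C=\frac{2(D-1)}{D-3}$ via Hardy's inequality, H\"older interpolation between $A<B<C$, and the inequality $b\int_{\Omega_{R_*}^c}|\phi_n|^B\ge a\int_{\Omega_{R_*}^c}|\phi_n|^A$ from Lemma~\ref{lem:Rstar} used in reverse. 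Relatedly, your strictness argument is not a contradiction as stated: $\phi_n\to0$ in $L^B_{\rm loc}$ is perfectly compatible with the global bound $\|\phi_n\|_{L^B}^B\ge|\mathcal V_0|/b$, since the $L^B$ mass can escape to spatial infinity; it is repaired either by the same uniform tail bound or, more simply, by noting that $\phi\equiv0$ would give $\mathcal V[\phi]=0$, contradicting $\mathcal V[\phi]\le\mathcal V_0<0$ once that inequality is actually established.

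There is a second, smaller gap in the vanishing-at-infinity step. The slice-wise embedding $W^{1,2}(S^1)\hookrightarrow L^\infty(S^1)$ with ``the $\tau$-derivative controlled by $\mathcal K[\phi]<\infty$'' gives $\esssup_\tau\phi(r,\tau)\le C\bigl(\|\phi(r,\cdot)\|_{L^2(S^1)}+\|\partial_\tau\phi(r,\cdot)\|_{L^2(S^1)}\bigr)$ for a.e. $r$, but finiteness of the $D$-dimensional integral $\mathcal K[\phi]$ only forces $\int_0^\infty r^{D-2}\|\partial_\tau\phi(r,\cdot)\|_{L^2(S^1)}^2\,\mathrm{d}r<\infty$: the slice norm need not decay like $r^{-\frac{D-3}{2}}$ and may spike on thin sets of radii, so the claimed bound on $\esssup_{(x,\tau)\in\Omega_R^c}|x|^{\frac{D-3}{2}}\phi$ does not follow slice by slice. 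The paper avoids this by averaging over the annulus $r_0<|x|<2r_0$ and exploiting the monotonicity in $r$ of $\esssup_\tau\phi(r,\tau)$ and $\essinf_\tau\phi(r,\tau)$, so that the kinetic energy enters only through its integral over the annulus (via Cauchy--Schwarz) and yields the correct rate; your version could be patched similarly (good-radius selection plus radial monotonicity), but that ingredient must be supplied. Note also that this decay is needed uniformly in $n$ for the sequence $\{\phi_n\}$, not just for the limit $\phi$, since it feeds the tail estimate above.
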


Notice that, since $\{\phi_n\}$ is a minimizing sequence of $\mathcal R$, the last condition implies
\begin{equation}
    \mathcal R[\phi]=\lim_{n\to\infty}\mathcal R[\phi_n].
\end{equation}

\begin{proof}
    In the following, we prove each of these conditions in turn.
    \subsubsection*{Step 1: Vanishing at spatial infinity}

    Since $\phi(r,\tau)$ is monotonically decreasing in $r$ $\mathcal{H}^2$-a.e.~$(r,\tau)$, we can bound its essential infimum over the compactified time dimension. For $\mathcal{H}$-a.e.~$r_0 > 0$, we have:
    \begin{align}
        \essinf_{\tau}|\phi(r_0,\tau)|&\leq\frac{1}{\beta}\int_0^\beta\odif{\tau}|\phi(r_0,\tau)|\nonumber\\
        &\leq\frac{1}{\omega_{D-1}r_0^{D-1}\beta}\int_{\Omega_{r_0}}\odif[order=D]{z}|\phi(z)|.\label{eq:essinf_phi}
    \end{align}
    Here, $||\phi||_{L^1(\Omega_{r_0})}$ is bounded\footnote{This argument can be generalized to a uniform bound on $||\phi_n||_{L^1(\Omega_{r_0})}$. For any $r_0 > 0$, an application of H\"older's inequality gives:
    \begin{align}
        \int_{\Omega_{r_0}}\odif[order=D]{z}|\phi_n(z)|&\leq\ab(\int_{\Omega_{r_0}}\odif[order=D]{z}\frac{|\phi_n(z)|^2}{|x|^2})^{1/2}\ab(\int_{\Omega_{r_0}}\odif[order=D]{z}|x|^2)^{1/2},\label{eq:bound_phi}
    \end{align}
    which is uniformly bounded from above by Lemma \ref{lem:bound}.} because of Lemma \ref{lem:conv}.

    For any $r_1 > r_0 > 0$, an application of H\"older's inequality to the annular region $\Omega_{r_0}^c \cap \Omega_{r_1}$ yields:
    \begin{align}
        \int_{\Omega_{r_0}^c\cap\Omega_{r_1}}\odif[order=D]{z}|\partial_\tau\phi(z)|&\leq\ab(\int_{\Omega_{r_0}^c\cap\Omega_{r_1}}\odif[order=D]{z}|\partial_\tau\phi(z)|^2)^{1/2}\ab(\int_{\Omega_{r_0}^c\cap\Omega_{r_1}}\odif[order=D]{z})^{1/2},\label{eq:bound_int_dphi_dtau}
    \end{align}
    which is bounded from above by Lemma \ref{lem:limit_finite}.

    Since both $\essinf_\tau\phi(r,\tau)$ and $\esssup_\tau\phi(r,\tau)$ are $\mathcal H$-a.e.~monotonically decreasing with respect to $r$, we have
    \begin{align}
        &\esssup_\tau\phi(r_1,\tau)\nonumber\\
        &\leq\frac{D-1}{r_1^{D-1}-r_0^{D-1}}\int_{r_0}^{r_1}\odif{r}r^{D-2}\esssup_\tau\phi(r,\tau)\nonumber\\
        &\leq\frac{D-1}{r_1^{D-1}-r_0^{D-1}}\int_{r_0}^{r_1}\odif{r}r^{D-2}\ab[\essinf_\tau\phi(r,\tau)+\int_0^\beta\odif{\tau}|\partial_\tau\phi(r,\tau)|]\nonumber\\
        &\leq\essinf_\tau\phi(r_0,\tau)+\frac{1}{\omega_{D-1}(r_1^{D-1}-r_0^{D-1})}\int_{\Omega_{r_0}^c\cap\Omega_{r_1}}\odif[order=D]{z}|\partial_\tau\phi(z)|,
    \end{align}
    for $\mathcal H$-a.e.~$r_1>r_0$. Now, Eqs.~\eqref{eq:bound_phi} and \eqref{eq:essinf_phi} imply that the first term is controlled by $r_0^{-(D-3)/2}$, while Eq.~\eqref{eq:bound_int_dphi_dtau} implies that the second term is controlled by $(r_1^{D-1}-r_0^{D-1})^{-1/2}$. Thus, setting $r_1=2r_0=R>0$, we obtain
    \begin{equation}
        \esssup_{(x,\tau)\in\Omega^c_{R}}|x|^{\frac{D-3}{2}}\phi(|x|,\tau)<\infty.\label{eq:esssup}
    \end{equation}

    Note that the same argument, applied to each $\phi_n$ in the sequence, establishes the uniform boundedness of $\esssup_{(x,\tau)\in\Omega^c_{R}}|x|^{\frac{D-3}{2}}\phi_n(|x|,\tau)$, a consequence of the uniform bounds on $\mathcal{T}[\phi_n]$ and $\mathcal{K}[\phi_n]$.

    \subsubsection*{Step 2: Non-triviality}
    We begin by decomposing the integral of $|\phi_n|^B$ into contributions from the interior domain $\Omega_R$ and the exterior domain $\Omega_R^c$:
    \begin{equation}
        \int\odif[order=D]{z}|\phi_n(z)|^B=\int_{\Omega_{R}}\odif[order=D]{z}|\phi_n(z)|^B+\int_{\Omega_{R}^c}\odif[order=D]{z}|\phi_n(z)|^B,\label{eq:int_phi_B}
    \end{equation}
    with $R>0$. For the interior domain, since $B$ is less than the critical Sobolev exponent $2D/(D-2)$, Lemma \ref{lem:conv} ensures that $\phi_n$ converges to $\phi$ in $L^B(\Omega_R)$, which implies
    \begin{equation}
        \int_{\Omega_R}\odif[order=D]{z}|\phi(z)|^B=\lim_{n\to\infty}\int_{\Omega_R}\odif[order=D]{z}|\phi_n(z)|^B.\label{eq:conv_phi_B}
    \end{equation}

    For the exterior domain, let $R_*$ be the radius defined in Lemma \ref{lem:Rstar}. Since the integral of the potential in this domain is negative, the admissibility condition on the potential, Eq.~\eqref{eq:condition_v}, implies
    \begin{equation}
        b\int_{\Omega_{R_*}^c}\odif[order=D]{z}|\phi_n(z)|^B\geq a\int_{\Omega_{R_*}^c}\odif[order=D]{z}|\phi_n(z)|^A.\label{eq:b-a}
    \end{equation}
    To control the integral of $|\phi_n|^A$, we first establish a uniform bound on a higher-order moment. Let $C = 2(D-1)/(D-3)>2D/(D-2)>B$. We have
    \begin{align}
        \int_{\Omega_{R_*}^c}\odif[order=D]{z}|\phi_n(z)|^{C}&=\int_{\Omega_{R_*}^c}\odif[order=D]{z}\left||x|^{\frac{D-3}{2}}\phi_n(z)\right|^{\frac{4}{D-3}}\frac{|\phi_n(z)|^2}{|x|^2}\nonumber\\
        &\leq\left|\esssup_{(x,\tau)\in\Omega_{R_*}^c}|x|^{\frac{D-3}{2}}\phi_n(z)\right|^{\frac{4}{D-3}}\int\odif[order=D]{z}\frac{|\phi_n(z)|^2}{|x|^2}.\label{eq:bound_phi_C}
    \end{align}
    As established in Step 1, the essential supremum term is uniformly bounded. The integral term is also uniformly bounded due to Lemma \ref{lem:bound}. Consequently, the integral on the left-hand side is uniformly bounded.
    
    Applying H\"older's inequality, we can relate the integrals of $|\phi_n|^B$ and $|\phi_n|^A$:
    \begin{align}
        \int_{\Omega_{R_*}^c}\odif[order=D]{z}|\phi_n(z)|^B&=\int_{\Omega_{R_*}^c}\odif[order=D]{z}|\phi_n(z)|^{A\frac{C-B}{C-A}}|\phi_n(z)|^{C\frac{A-B}{A-C}}\nonumber\\
        &\leq\ab(\int_{\Omega_{R_*}^c}\odif[order=D]{z}|\phi_n(z)|^A)^{\frac{C-B}{C-A}}\ab(\int_{\Omega_{R_*}^c}\odif[order=D]{z}|\phi_n(z)|^C)^{\frac{B-A}{C-A}}.\label{eq:holder_C}
    \end{align}
    Combining this with Eq.~\eqref{eq:b-a} yields a bound on the integral of $|\phi_n|^A$:
    \begin{equation}
        \int_{\Omega_{R_*}^c}\odif[order=D]{z}|\phi_n(z)|^A\leq\ab(\frac{b}{a})^{\frac{C-A}{B-A}}\int_{\Omega_{R_*}^c}\odif[order=D]{z}|\phi_n(z)|^C.
    \end{equation}

    This allows us to control the tail of the integral of $|\phi_n|^B$. For $R>R_*$, we have
    \begin{align}
        &\int_{\Omega_R^c}\odif[order=D]{z}|\phi_n(z)|^B\nonumber\\
        &=|\esssup_\tau\phi_n(R,\tau)|^{B-A}\int_{\Omega_R^c}\odif[order=D]{z}|\phi_n(z)|^A\nonumber\\
        &\leq|\esssup_\tau\phi_n(R,\tau)|^{B-A}\int_{\Omega_{R_*}^c}\odif[order=D]{z}|\phi_n(z)|^A\nonumber\\
        &\leq R^{-\frac{D-3}{2}(B-A)}\left|\esssup_{(x,\tau)\in\Omega_{R}^c}|x|^{\frac{D-3}{2}}\phi_n(z)\right|^{B-A}\ab(\frac{b}{a})^{\frac{C-A}{B-A}}\int_{\Omega_{R_*}^c}\odif[order=D]{z}|\phi_n(z)|^C\nonumber\\
        &\leq\frac{\mathcal C}{R^{\frac{D-3}{2}(B-A)}}.\label{eq:bound_phi_B}
    \end{align}
    Here, $\mathcal C$ is a constant independent of $n$, which follows from the uniform bounds established in Step 1 and Eq.~\eqref{eq:bound_phi_C}.

    The convergence of $\phi_n$ to $\phi$ in $L^q_{\rm loc}$ for $q>0$ implies convergence in measure on compact sets. A standard result in measure theory then guarantees the existence of a subsequence, which we do not relabel, that converges to $\phi$ for $\mathcal H^D$-almost every $z$. The lower semicontinuity of $V$ allows us to relate the potential of the limit function to the limit of the potentials:
    \begin{equation}
        V(\phi(z))+b|\phi(z)|^B\leq\liminf_{n\to\infty}\ab[V(\phi_n(z))+b|\phi_n(z)|^B],
    \end{equation}
    for $\mathcal H^D$-a.e.~$z$. The admissibility condition on the potential ensures that the integrand $V(\phi_n(z))+b|\phi_n(z)|^B$ is non-negative. Integrating the pointwise inequality and applying Fatou's lemma then yields:
    \begin{equation}
        \mathcal V[\phi]+b\int\odif[order=D]{z}|\phi(z)|^B\leq \liminf_{n\to\infty}\ab[\mathcal V[\phi_n]+b\int\odif[order=D]{z}|\phi_n(z)|^B].\label{eq:fatou}
    \end{equation}
    
    To establish the lower semicontinuity of $\mathcal{V}$, we bound the right-hand side of Eq.~\eqref{eq:fatou}. Using the decomposition from Eq.~\eqref{eq:int_phi_B}, we have:
    \begin{align}
        &\liminf_{n\to\infty}\ab[\mathcal V[\phi_n]+b\int\odif[order=D]{z}|\phi_n(z)|^B]\nonumber\\
        &\leq\liminf_{n\to\infty}\mathcal V[\phi_n]+\limsup_{n\to\infty}b\int\odif[order=D]{z}|\phi_n(z)|^B\nonumber\\
        &\leq\liminf_{n\to\infty}\mathcal V[\phi_n]+\limsup_{n\to\infty}b\int_{\Omega_R}\odif[order=D]{z}|\phi_n(z)|^B+\limsup_{n\to\infty}b\int_{\Omega_R^c}\odif[order=D]{z}|\phi_n(z)|^B\nonumber\\
        &\leq\liminf_{n\to\infty}\mathcal V[\phi_n]+b\int_{\Omega_R}\odif[order=D]{z}|\phi(z)|^B+\frac{b\mathcal C}{R^{\frac{D-3}{2}(B-A)}}.
    \end{align}
    Here, we have used the strong convergence in the interior domain (Eq.~\eqref{eq:conv_phi_B}) and the uniform bound on the exterior tail (Eq.~\eqref{eq:bound_phi_B}). Combining this with Eq.~\eqref{eq:fatou} and taking the limit $R\to\infty$, we obtain the lower semicontinuity of the potential term:
    \begin{equation}
        \mathcal V[\phi]\leq \liminf_{n\to\infty}\mathcal V[\phi_n]=\mathcal V[\phi_0]=\limsup_{n\to\infty}\mathcal V[\phi_n].\label{eq:v_conv}
    \end{equation}

    Combining the lower semicontinuity of the kinetic term, Eq.~\eqref{eq:liminf_k}, with the result for the potential term, Eq.~\eqref{eq:v_conv}, we can establish the non-triviality of the limit function $\phi$:
    \begin{equation}
        \mathcal V[\phi]+\mathcal K[\phi]\leq\mathcal V_0+\liminf_{n\to\infty}\mathcal K[\phi_n]\leq\mathcal V_0+\sup_n\mathcal K[\phi_n]<0.
    \end{equation}

    \subsubsection*{Step 3: Lower semicontinuity}
    Finally, we combine the lower semicontinuity of the gradient and kinetic terms (Eqs.~\eqref{eq:liminf_t} and \eqref{eq:liminf_k}) with the convergence of the potential term (Eq.~\eqref{eq:v_conv}) to establish the lower semicontinuity of the functional $\mathcal R$:
    \begin{equation}
        \mathcal R[\phi]=\frac{(\mathcal T[\phi])^{\frac{D-1}{D-3}}}{-\mathcal V[\phi]-\mathcal K[\phi]}\leq\frac{(\liminf_{n\to\infty}\mathcal T[\phi_n])^{\frac{D-1}{D-3}}}{\limsup_{n\to\infty}(-\mathcal V[\phi_n]-\mathcal K[\phi_n])}\leq\liminf_{n\to\infty}\mathcal R[\phi_n].
    \end{equation}
%\qed
\end{proof}

\section{Conclusion}

\label{sec:Conclusion}

In this work, we have extended the classical analysis of CGM to finite temperature, providing a rigorous proof that, for a broad class of scalar potentials, the bounce solution is necessarily $O(D-1)$-symmetric and monotonic in the spatial directions. This finding is non-trivial since compactification of the Euclidean time dimension explicitly breaks the full $O(D)$ symmetry, potentially admitting more complex solutions. A rigorous proof is thus essential to establish a firm mathematical foundation for the symmetry properties widely assumed in studies of thermal vacuum decay.

Our methodology hinges on reformulating the search for saddle points of the Euclidean action into an equivalent, genuine minimization problem for a scale-invariant functional, $\mathcal R$. We first proved that a minimizer of this reduced problem corresponds to a non-trivial saddle point of the action. 
Then, by leveraging Steiner symmetrization and extending the recent theorems, we have shown that any minimizer of the reduced problems is $O(D-1)$ symmetric and monotonic in the spatial directions.

Notably, our admissibility conditions on the potential are identical to those imposed by CGM for the zero-temperature case, and our proof naturally recovers the $O(D-1)$ part of the CGM result in the low-temperature limit ($\beta\to\infty$).
This is a direct consequence of the reduced problem approach, which must guard against collapse into highly localized field configurations. In such limits, the size of $S^1$ becomes irrelevant, and the problem effectively reverts to the $D$-dimensional zero-temperature scenario, necessitating the same growth conditions on the potential.  While these conditions are sufficient, they may not be necessary for all temperatures: The existence of bounce solutions under weaker conditions in the high-temperature limit suggests that one may relax the conditions depending on the temperature, though it would likely require a departure from the present reduced problem approach.

From a physical perspective, our findings have direct implications for early universe cosmology. First-order phase transitions, driven by thermal vacuum decay, are central to scenarios such as electroweak baryogenesis and the production of stochastic gravitational wave backgrounds. The symmetry of the bounce solution dictates the decay rate and the most probable shape of nucleated bubbles, which in turn govern the dynamics of bubble growth, collision, and the resulting cosmological signatures. Our work provides a rigorous underpinning for the predictions of these models and, for the first time, justifies the assumption of $O(3)$ symmetry in four dimensions used in the literature.

There are several avenues for future research. The first is the inclusion of gravitational effects. Proving that an $O(D-1)$-symmetric bounce solution minimizes the action in a gravitational theory at finite temperature remains an important open problem, analogous to the long-standing challenge at zero temperature that was recently addressed for the AdS case \cite{Oshita:2023pwr}. Another natural extension is to generalize our results to multi-component scalar fields as in the zero temperature case \cite{Blum:2016ipp}, which would require a careful adaptation of rearrangement inequalities. Finally, relaxing the admissibility conditions on the potential could further broaden the applicability of these results.

\section*{Acknowledgements}
We would like to thank Takato Mori and Naritaka Oshita for useful discussions and comments.
Y.S. is supported by the Slovenian Research Agency under the research grant J1-4389. M.Y. is supported by IBS under the project code, IBS-R018-D3, and by JSPS Grant-in-Aid for Scientific Research Number JP23K20843.

\section*{Data availability}
This study is theoretical and does not generate or analyze any datasets. All results follow from analytical arguments presented in the manuscript.

\section*{Conflict of interest}
The authors declare no competing interests.

\bibliographystyle{unsrt}
\bibliography{steiner}

\end{document}